 \def\QQ{{\mathbb Q}}
\def\RR{{\mathbb R}} \def\ZZ{{\mathbb Z}}
\def\A{{\mathcal A}} 
\def\OO{{\mathbb O}}
\def\LP{{\mathbb{LP}}}
\def\OPT{\mbox{OPT}}
\newcommand{\conv}{\mbox{conv}}
\newcommand{\enc}[1]{\langle #1\rangle}
\theoremstyle{plain}
\newtheorem{theorem}{\noindent Theorem}
\newtheorem{corollary}[theorem]{\noindent Corollary} 
\newtheorem{lemma}[theorem]{\noindent Lemma}
\newtheorem{proposition}[theorem]{\noindent Proposition} 
\theoremstyle{definition}
\newtheorem{definition}{\noindent Definition}
\newtheorem{example}{\noindent Example}
\newtheorem{remark}{\noindent Remark}
\title{Efficient edge-skeleton computation for polytopes defined by oracles}
\author{Ioannis\,Z.~Emiris\thanks{Dept Informatics \&
Telecommunications, U.~of Athens, Greece. 
 \texttt{emiris@di.uoa.gr}} 
 \and Vissarion Fisikopoulos\thanks{Computer Science Department, Universit\`e libre de Bruxelles CP 216, Boulevard du Triomphe, 1050 Brussels, Belgium. 
\texttt{vfisikop@ulb.ac.be}} 
 \and Bernd~G\"artner\thanks{Department of Computer Science, ETH Zurich, Switzerland. \texttt{gaertner@inf.ethz.ch}}
}
\begin{document}

\maketitle


\begin{abstract}
In general dimension, there is no known total polynomial algorithm for either
convex hull or vertex enumeration, i.e.\ an algorithm whose complexity depends
polynomially on the input and output sizes.
It is thus important to identify problems (and polytope representations)
for which total polynomial-time algorithms can be obtained.
We offer the first total polynomial-time algorithm for
computing the edge-skeleton (including vertex enumeration) of a polytope
given by an optimization or separation oracle, where
we are also given a superset of its edge directions.
We also offer a space-efficient variant of our algorithm 
by employing reverse search. 
All complexity bounds refer to the (oracle) Turing machine model.
There is a number of polytope classes naturally defined by oracles; 
for some of them neither vertex nor facet representation is obvious.
We consider two main applications, where we obtain
(weakly) total polynomial-time algorithms:
Signed Minkowski sums of convex polytopes, where polytopes can be
subtracted provided the signed sum is a convex polytope, and
computation of secondary, 
resultant, and discriminant polytopes.
Further applications include convex combinatorial optimization
and convex integer programming, where we offer a new approach, thus  
removing the complexity's exponential dependence in the dimension.
\end{abstract}

\paragraph{Keywords}
general dimension, polytope oracle, edge-skeleton,
total polynomial-time, linear optimization, vertex enumeration 


\section{Introduction}\label{sect:Intro}

Convex polytopes are fundamental geometric objects in science and engineering. Their applications are ranging from theoretical computer science to optimization and algebraic geometry. 
Polytopes in general dimension admit a number of alternative
representations.
The best known, explicit representations for a bounded polytope $P$ are either
the set of its vertices (V-representation)
or a bounded intersection of halfspaces (H-representation). 
Switching between the two representations constitutes the
convex hull and vertex enumeration problems. 
A linear programming problem (LP) on $P$ 
consists in finding a vertex of $P$ that maximizes the inner product 
with a given objective vector $c$. This is very easy if $P$ is in
V-representation; even if $P$ is in H-representation, this LP
can be solved in polynomial time.

In general dimension, there is no polynomial-time algorithm for either 
convex hull or vertex enumeration, since the output size can be
exponential in the worst case by the upper bound theorem~\cite{McMullen}. 
 We therefore wish to take the output
 size into account and say that an algorithm runs in 
{\em total polynomial} time if its
time complexity is bounded by a polynomial in the input \emph{and} output size.
There is no known total polynomial-time algorithm
for either convex hull or vertex enumeration. 
In~\cite{AvisSeidel} they identify, for each known type of convex hull
algorithm, explicit families of polytopes for which the algorithms
run in superpolynomial time.
 
However, finding the vertices of the convex hull of a given point
set reduces to LP
and has thus polynomial complexity in the input.
The algorithm in~\cite{AvisF92} solves, in total polynomial-time,
vertex enumeration for simple polytopes and convex hull 
for simplicial polytopes. 
For 0/1-polytopes a total polynomial-time algorithm for vertex enumeration is presented in~\cite{Bussieck98}, where a 0/1-polytope is such that all 
vertices have coordinates 0 or~1. 
In this paper we establish another case where total polynomial-time 
algorithms exist.

An important explicit representation of a polytope is the {\em edge-skeleton}
(or 1-skeleton), which is the graph of polytope vertices and edges, 
but none of the faces of dimension larger than one. For simple polytopes, the
edge-skeleton determines the complete face lattice (see~\cite{JoswigKaibelK02}
and the references therein), but in general, this is false.
The edge-skeleton is a useful and compact representation employed in
different problems, e.g.\ in computing general-dimensional
Delaunay triangulations of a given pointset:
In~\cite{BoiDevHor09} the authors show how the edge-skeleton suffices
for point location by allowing them to recover only the needed full-dimensional 
simplices of the triangulation.
Another application is in mixed volume computation~\cite{Malajovich14}.

In this paper we study the case where a polytope $P$ is given by an implicit 
representation, where the only access to $P$
is a black box subroutine (oracle) that solves the LP problem on $P$
for a given vector $c$. 
Then, we say that $P$ is given by an {\it optimization}, or {\it LP}
oracle.  Given such an oracle, the entire polytope can be reconstructed,
and both V- and H-representations can be found using the
Beneath-Beyond method \cite{EFKP12,Hug06}, although not in 
total polynomial-time. 

Another important implicit representation of $P$ is obtained through
a {\it separation} oracle (Section~\ref{sect:Oracles}). Celebrated
results of Khachiyan~\cite{Kha1} as well as Gr\"otschel, Lov\'asz and
Schrijver~\cite{GLS93} show that separation and optimization oracles
are polynomial-time equivalent (Proposition~\ref{PropEquival}). Many important
results in combinatorial optimization use the fact that the separation
oracle implies the optimization oracle.
In our study, we also need the other direction:
Given an optimization oracle, compute a separation oracle for $P$. 

The problem that we study is closely related to vertex enumeration. 
We are given an optimization oracle for a polytope $P$ and a set of 
vectors that is guaranteed to contain 
the directions of all edges of $P$; edge directions are given by unit vectors.
We are asked to compute the edge-skeleton of $P$ so
the vertices are also computed.
This is similar to the fundamental Minkowski
reconstruction problem, e.g.~\cite{Gritzmann99onthe},
except that, instead of information on the facets,
we have information about the 1-dimensional faces (and an oracle).
The problem of the reconstruction of a simple polytope by its edge-skeleton 
graph is studied in~\cite{JoswigKaibelK02}.

The most relevant previous work is
an algorithm for vertex enumeration of $P$, given the same input:
an optimization oracle and a superset $D$ of all edge directions
\cite{RothblumOnn} (Proposition~\ref{pr:rothblum_onn}). 
It runs in total polynomial-time in fixed dimension. 
The algorithm computes the zonotope $Z$ of $D$, then computes an 
arbitrary vector in the normal cone of each vertex of $Z$ and calls 
the oracle with this vector.
It outputs all vertices without further information.
Computing the edges from $n$ vertices can be done by $O(n^2)$
calls to LP.  

\subsection{Applications} 

Edge-skeleton computation given an oracle and a superset of edge directions
naturally appears in many applications. 
In Section~\ref{sect:applic} we offer new efficient algorithms for the first two applications below.

One application is the {\em signed Minkowski sum} problem where, besides
addition, we also allow a restricted case of {\em Minkowski subtraction}. 
Let  $A-B$ be polytope $C$ such that $A$ can be written as a sum $A=B+C$. 
In other words, a signed Minkowski sum equality such as 
$P - Q + R - S = T$ should be interpreted as $P + R = Q + S + T$.
Such sums are motivated by the fact that resultant and discriminant
polytopes (to be defined later) are written as signed sums of
secondary polytopes \cite{MicCoo00}, \cite[Thm~11.1.3]{GKZ}.  
Also, matroid polytopes and generalized permutahedra can be written as signed Minkowski sums~\cite{Ardila10}.
We assume that the summands are given by optimization oracles and the supersets of their edge directions. 
This is natural since we show that these supersets can be precomputed for resultant and secondary polytopes.    

Minkowski sums have been studied extensively.
Given $r$ V-polytopes in $\RR^d$, Gritzmann et al.~\cite{Gritzmann93} 
deal with the various Minkowski sum problems
that occur if they regard none, one, or both of $r$ and $d$ as constants.
They give polynomial algorithms for fixed $d$ regardless of the input 
representation. For varying $d$ they show that no polynomial-time algorithm
exists except for the case of fixed $r$ in the binary model of computation.
Fukuda~\cite{Fukuda04} (extended in~\cite{FukudaW05})
gives an LP-based 
algorithm for the Minkowski sum of polytopes in V-representation 
whose complexity, in the binary model of computation,
is total polynomial, and depends polynomially on $\delta$, 
which is the sum of the maximum vertex degree in each summand.
However, we are not aware of any algorithm for signed Minkowski sums
and it is not obvious how the above algorithms for Minkowski sums 
can be extended to the signed case. 

The second application is resultant, secondary as well as 
discriminant polytopes.
For resultant polytopes at least, the only plausible representation
seems to be via optimization oracles~\cite{EFKP12}.
{\em Resultants} are fundamental in computational algebraic geometry
since they generalize determinants to nonlinear systems~\cite{St94,GKZ}.
The Newton polytope $R$ of the resultant, or {\em resultant polytope},
is the convex hull of the exponent vectors corresponding to nonzero terms.
A resultant is defined for $k+1$ pointsets in $\ZZ^k$.
If $R$ lies in $\RR^d$, the total number of input points is $d+2k+1$.
If $n$ is the number of vertices in $R$, typically $n\gg d\gg k$, 
so $k$ is assumed fixed.
A polynomial-time optimization oracle yields
an output-sensitive algorithm for the computation of $R$ \cite{EFKP12} (Lemma~\ref{EFKP12_prop}).

This approach can also be used for 
computing the secondary and discriminant polytopes, defined in~\cite{GKZ};
cf.~\cite{DeLRamSan} on secondary polytopes.
The secondary polytope of a pointset is a fundamental object since it
offers a realization of the graph of regular triangulations
of the pointset.
A total polynomial-time algorithm for the secondary polytope 
when $k$ is fixed is given in~\cite{Masada96}.
A specific application of discriminant polytopes is discussed
in~\cite{discrim_vol}, where the author establishes
a lower bound on the volume of the discriminant polytope of a
multivariate polynomial, thus refuting a conjecture by E.I.~Shustin on 
an asymptotic upper bound for the number of real hypersurfaces.

The size of all these polytopes is typically exponential in $d$: the number of
vertices of $R$ is $O(d^{2d^2})$~\cite{St94}, and the number of $j$-faces 
(for any $j$) of the secondary polytope 
is $O(d^{(d-1)^2})$, which is tight if $d$ is fixed~\cite{Billera1990155}.  

More applications of our methods exist.  
One is in \emph{convex combinatorial optimization}: given
$\mathcal{F}\subset 2^N$ with $N=\{1,\dots,n\}$, a vectorial weighting
$w: N \rightarrow\QQ^d$, and a convex functional $c: \QQ^d \rightarrow
\QQ$, find $F\in\mathcal{F}$ of maximum value $c(w(F))$. This
captures a variety of (hard) problems studied in operations research 
and mathematical programming, including quadratic assignment, 
scheduling, reliability, bargaining games, and inventory management, 
see~\cite{OnnR04} and references therein. 
The standard linear combinatorial optimization problem is the special 
case with $d=1, w:N \rightarrow\QQ$,
and $c:\QQ \rightarrow\QQ:x\mapsto x$ being the identity.
As shown in~\cite{OnnR04}, a convex combinatorial optimization problem
can be solved in polynomial-time for fixed $d$, if we know the edge
directions of the polytope given by the convex hull of the incidence
vectors of the sets in $\mathcal{F}$.

Another application is \emph{convex integer maximization}, where we
maximize a convex function over the integer hull of a
polyhedron.  In~\cite{DeLoera20091569}, the vertex enumeration
algorithm of~\cite{RothblumOnn}---based on the knowledge of edge
directions---is used to come up with polynomial algorithms for many
interesting cases of convex integer maximization, such as multiway
transportation, packing, vector partitioning and clustering.  A set
that contains the directions of all edges is computed via Graver
bases, and the enumeration of all vertices of a projection of the
integer hull suffices to find the optimal solution.

\subsection{Our contribution} 
We present the first total polynomial-time algorithm for computing the edge-skeleton
of a polytope, given an optimization oracle, and a set of directions 
that contains the polytope's edge directions. 
The polytope is assumed to have some (unknown) H-representation with an
arbitrary number of inequalities, but each of known bitsize, 
as shall be specified below.
Our algorithm also works if the polytope is given by a separation oracle.
All complexity bounds refer to the (oracle) Turing machine model, thus
leading to (weakly) polynomial-time algorithms when the oracle is
of polynomial-time complexity. 
By employing  the reverse search method of~\cite{AvisF92} we offer 
a space-efficient variant of our algorithm.  It remains open
whether there is also a strongly polynomial-time algorithm 
in the real RAM model of computation.

Our algorithm yields the first (weakly) total polynomial-time algorithms 
for the edge-skeleton (and vertex enumeration) of
signed Minkowski sum, and resultant polytopes (for fixed $k$). 
For both polytope classes, optimization oracles are naturally and efficiently
constructed, whereas it is not straightforward to
obtain the more commonly employed membership or separation oracles. For signed Minkowski sum we assume that we know the supersets of edge directions for summands. This includes the important cases where the summands are V-polytopes, and secondary  polytopes. 
For resultant polytopes, optimization oracles offer
the most efficient known representation.
Our results on resultant polytopes extend to secondary polytopes, as well as discriminant polytopes.
Recall that a different approach in
the same complexity class is known for secondary polytopes~\cite{TOPCOM2}.

Regarding the problems of convex combinatorial optimization and 
convex integer programming the current approaches use the algorithm 
of~\cite{RothblumOnn} whose complexity has an exponential dependence 
on the dimension (Proposition~\ref{pr:rothblum_onn}).
The utilization of our algorithm instead offers an alternative approach while 
removing the exponential dependence on the dimension. 

\paragraph{Outline}
The next section specifies our theoretical framework.
Section\ \ref{sect:algorithms} introduces polynomial-time algorithms for
the edge-skeleton.
Section\ \ref{sect:applic} applies our results to signed Minkowski sums, 
as well as resultant and secondary polytopes.
We conclude with open questions.  

\section{Well-described polytopes and oracles}\label{sect:Oracles} 

This section describes our theoretical framework and relates
the most relevant oracles. We start with the notation used in this paper followed by some basics from polytope theory; for a detailed presentation we refer to~\cite{Ziegler}.

We denote by $d$ the ambient space dimension
and $n$ the number of vertices of the output (bounded) polytope; $k$ denotes
dimension when it is fixed (e.g.\ input space for resultant
polytopes); conv$(A)$ is the convex hull of $A$.
Moreover, $\varphi$ is an upper bound for the encoding length
of every inequality defining a well-described polytope (see the next section); 
$\enc{X}$ denotes the binary
encoding size of an explicitly given object $X$ (e.g., a set of
vectors). For a well-described and implicitly given polytope
$P\subseteq\RR^d$, we 
will define $\enc{P}:=d+\varphi$.
Let $\OO:\RR\rightarrow\RR$ denote a polynomial such that the
oracle conversion algorithms of Proposition~\ref{PropEquival}
all run in oracle polynomial-time $\OO(\enc{P})$
for a given well-described polytope $P$. The polynomial
$\LP:\RR\rightarrow\RR$ 
is such that $\LP(\enc{A} + \enc{b} + \enc{c})$ bounds the runtime of 
maximizing $c^Tx$ over the polyhedron $\{x\ |\ Ax\leq b\}$.

A convex polytope $P\subseteq\RR^d$ can be represented as the convex hull
of a finite set of points, called the \emph{V-representation} of $P$. 
In other words, $P=$ conv$(A)$, where $A=\{p_1,\dots,p_n\}\subseteq\RR^d$. 
Another, equivalent representation of $P$ is as the bounded  intersection of a finite set of halfspaces or linear inequalities, called the \emph{H-representation} of $P$.
That is, $P=\{x\,|\,Ax\leq b\},A\subseteq\RR^{m\times d},x\in\RR^d,b\in\RR^m$. 
Given $P$, an inequality or a halfspace $\{a^Tx\leq \beta\}$ (where
$a\in\RR^d,\beta\in\RR$) is called \emph{supporting} if $a^Tx\leq
\beta$ for all $x\in P$ and $a^Tx=
\beta$ for some $x\in P$.
The set $\{x\in P\ |\ a^Tx= b\}$ is a \emph{face} of $P$.

\begin{definition}
The {\it polar dual polytope} of $P$ is defined as:
$$
P^*:=\{c\in\RR^d : c^Tx\leq 1 \text{~for all}\ x\in P\}\subseteq\RR^d,
$$ 
where we assume that the origin $\mathbf{0}\in int(P)$,
the relative interior of $P$,
i.e.\ $\mathbf{0}$ is not contained in any face of $P$ of dimension $< d$.
\end{definition}

For our results, we need to assume that the output
polytope is \emph{well-described}~\cite[Definition 6.2.2]{GLS93}. This
will be the case in all our applications. 

\begin{definition}
A rational polytope $P\subseteq\RR^d$ is
{\em well-described} (with a parameter $\varphi$ that we need to know
explicitly) if there exists an H-representation for $P$ in which
every inequality has encoding length at most $\varphi$.
The 
\emph{encoding length} of a well-described polytope is 
$\enc{P} = d + \varphi$. 
Similarly, the \emph{encoding length} of a set of vectors $D\subseteq\RR^d$ 
is $\enc{D} = d + \nu$ if every vector in $D$ 
has encoding length at most $\nu$.
\end{definition}

In defining $P$, the inequalities are not known themselves, and
we make no assumptions about their number.
The following lemma connects the encoding length of inequalities 
with the encoding length of vertices.

\begin{lemma}{\rm \cite[Lemma 6.2.4]{GLS93}}\label{lem:encoding}
 Let $P\subseteq\RR^d$. If every inequality in an H-re\-presentation 
for $P$ has encoding length at most $\varphi$, then every vertex of $P$ 
has encoding length at most $4d^2\varphi$. 
If every vertex of $P$ has encoding length at most $\nu$, then every
inequality of its H-representation has encoding length at most $3d^2\nu$.
\end{lemma}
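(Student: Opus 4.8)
The plan is to derive both implications from Cramer's rule combined with the standard bound $\enc{\det M}\le 2\enc{M}$ for a square rational matrix $M$, which itself follows from Hadamard's inequality (bounding the numerator of $\det M$) together with a bound on the common denominator of the entries of $M$. Throughout, I take $\enc{\cdot}$ of a matrix or vector to be the sum of the encoding lengths of its entries plus a small additive overhead, so that a $d\times d$ matrix each of whose rows is a restriction of a vector of encoding length at most $\tau$ has encoding length at most $d\tau$ up to overhead.

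For the first implication, I would fix a vertex $v$ of $P$ and select $d$ inequalities of the given H-representation that are tight at $v$ and whose coefficient vectors are linearly independent; writing this square subsystem as $A'x=b'$, the vertex is its unique solution $v=(A')^{-1}b'$. Each row of $[A'\mid b']$ has encoding length at most $\varphi$, hence $\enc{A'}$ and $\enc{A'_i}$ are at most $d\varphi$ up to overhead, where $A'_i$ denotes $A'$ with its $i$-th column replaced by $b'$. By Cramer's rule $v_i=\det(A'_i)/\det(A')$, and expressing $v$ over the common denominator $\det(A')$ gives $\enc{v}\le\enc{\det A'}+\sum_{i=1}^{d}\enc{\det A'_i}\le 2d\varphi+2d^2\varphi\le 4d^2\varphi$.

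For the converse, I would fix a facet $F$ of $P$, pick $d$ affinely independent vertices $v_1,\dots,v_d\in F$, each of encoding length at most $\nu$, and recover the supporting hyperplane $\{x: a^Tx=\beta\}$ of $F$ by observing that $(a,-\beta)$ spans the kernel of the $d\times(d+1)$ matrix $V$ whose $j$-th row is $(v_j^T,-1)$; a nonzero kernel vector is given explicitly by the signed maximal minors of $V$. Each such minor is the determinant of a $d\times d$ matrix whose rows have encoding length at most $\nu$, hence has encoding length at most $2d\nu$, and so the inequality $(a,\beta)\in\RR^{d+1}$ has encoding length at most $(d+1)\cdot 2d\nu$ up to overhead, which is at most $3d^2\nu$. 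When $\dim P<d$, the equalities cutting out the affine hull of $P$ are handled by the same minor computation after extending the chosen vertices to an affine basis of $\RR^d$.

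I expect the only genuinely delicate point to be the bookkeeping of the constant additive overheads hidden in $\enc{\cdot}$, so that the final bounds come out as exactly $4d^2\varphi$ and $3d^2\nu$ rather than a marginally larger multiple of $d^2$; everything else is a direct invocation of Cramer's rule and the Hadamard-type determinant estimate, exactly as carried out in \cite[Lemma 6.2.4]{GLS93}.
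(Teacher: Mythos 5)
The paper offers no proof of this lemma---it is quoted verbatim from \cite[Lemma~6.2.4]{GLS93}---and your argument (Cramer's rule on a square subsystem of tight inequalities for the vertex bound, and signed maximal minors of the matrix of $d$ affinely independent facet vertices for the converse, both combined with the Hadamard-type bound $\enc{\det M}\le 2\enc{M}$) is exactly the standard proof given there, so it matches the source in approach and is correct. The only points needing the bookkeeping care you already flag are the additive constants in $\enc{\cdot}$ and the fact that the crude estimate $(d+1)\cdot 2d\nu\le 3d^2\nu$ only holds for $d\ge 2$, with $d=1$ handled trivially (or absorbed by the exact GLS determinant bounds).
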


The natural model of computation when $P$ is given by an oracle is
that of an \emph{oracle Turing machine}~\cite[Section 1.2]{GLS93}.
This is a Turing machine that can (in one step) replace any input to
the oracle (to be prepared on a special oracle tape) by the output resulting
from calling the oracle, where we assume that the output size is
polynomially bounded in the input size. An algorithm is \emph{oracle
  polynomial-time} if it can be realized by a polynomial-time oracle
Turing machine. Moreover it is {\em total polynomial-time} if its
time complexity is bounded by a polynomial in the input \emph{and} output size.

In this paper, we consider three oracles for polytopes; they can more
generally be defined for (not necessarily bounded) polyhedra,
but we do not need this:
\begin{itemize}
\item {\it Optimization} ($\text{OPT}_P(c)$):
 Given vector $c\in \RR^d$, either find a point $y\in P$
maximizing $c^{T}x$ over all $x\in P$, or assert $P=\emptyset$.
\smallskip

\item {\it Violation} ($\text{VIOL}_P(c,\gamma)$): Given vector $c\in\RR^d$ and
$\gamma\in\RR$, either find point $y\in P$ such that $c^Ty>\gamma$, or 
assert that $c^Tx\leq\gamma$ for all $x\in P$. \smallskip

\item {\it Separation} (SEP$_P(y)$):
 Given point $y\in \RR^d$, either certify that $y\in P$, or
find a hyperplane that separates $y$ from $P$; i.e.\
find vector $c\in\RR^d$ such that $c^Ty> c^Tx$ for all $x\in P$.  \smallskip
\end{itemize}

The following is a main result of~\cite{GLS93}
and the cornerstone of our method.

\begin{proposition} {\rm\cite[Theorem~6.4.9]{GLS93}}\label{PropEquival}
For a well-described polytope, any one of the three
aforementioned oracles is sufficient to compute the other
two in oracle polynomial-time. The runtime (polynomially) depends on
the ambient dimension $d$ and the bound $\varphi$ for the maximum
encoding length of an inequality in some H-representation of $P$. 
\end{proposition}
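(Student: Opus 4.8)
The plan is to prove the equivalence by a cycle of oracle-polynomial reductions, $\text{SEP}_P\rightarrow\text{OPT}_P\rightarrow\text{VIOL}_P\rightarrow\text{SEP}_P$; since each arrow adds only a polynomial factor in $\enc{P}=d+\varphi$ on top of the underlying oracle, traversing the cycle produces all three oracles from any one of them. The workhorse for the non-trivial arrows is the central-cut ellipsoid method, and the sole role of the well-described hypothesis is to supply the a~priori size estimates it needs: by Lemma~\ref{lem:encoding} every vertex of $P$ has encoding length at most $4d^2\varphi$, so $P$ lies in a ball of radius $R=2^{O(d^2\varphi)}$ about the origin, while any full-dimensional auxiliary polytope obtained from $P$ by intersecting with explicitly given halfspaces contains a ball of radius $r=2^{-\mathrm{poly}(\enc{P})}$; hence $\log(R/r)$, and with it the number of ellipsoid iterations, is polynomial in $\enc{P}$.

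First I would dispatch $\text{OPT}_P\rightarrow\text{VIOL}_P$, which is immediate: on input $(c,\gamma)$ call $\text{OPT}_P(c)$; if it reports $P=\emptyset$ or returns $y$ with $c^Ty\le\gamma$, assert that $c^Tx\le\gamma$ on $P$, and otherwise output $y$. For $\text{SEP}_P\rightarrow\text{OPT}_P$ I would (i) decide whether $P\neq\emptyset$ by running the ellipsoid method with $\text{SEP}_P$ inside the ball of radius $R$; (ii) binary-search the optimum $t^\ast=\max\{c^Tx:x\in P\}$, where for each trial value $t$ the polytope $P\cap\{x:c^Tx\ge t\}$ inherits a separation oracle from $\text{SEP}_P$ together with the explicit halfspace, so a further ellipsoid run decides its (non)emptiness; (iii) halt after polynomially many bisections, because $t^\ast$ is attained at a vertex and is therefore a rational of encoding length polynomial in $\enc{P}+\enc{c}$; and (iv) turn the last approximate ellipsoid point into an exact optimal vertex by a rounding step (simultaneous Diophantine approximation via continued fractions), which succeeds precisely because the target vertex has denominator bounded through Lemma~\ref{lem:encoding}.

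To close the cycle, for $\text{VIOL}_P\rightarrow\text{SEP}_P$ I would pass to the polar dual. After an initial translation placing $\mathbf{0}\in\mathrm{int}(P)$ -- obtained by averaging a few points of $P$ returned by oracle calls along the coordinate directions and their negatives, or, when $P$ is not full-dimensional, by carrying its affine hull along throughout -- the point is that $\text{VIOL}_P(\cdot,1)$ \emph{is} a separation oracle for $P^\ast$: a query vector $c_0$ lies in $P^\ast$ exactly when $c_0^Tx\le 1$ for all $x\in P$, which is the assertion alternative of $\text{VIOL}_P(c_0,1)$; and if instead $\text{VIOL}_P(c_0,1)$ returns $x\in P$ with $c_0^Tx>1$, then $x^Tc\le 1$ is an inequality valid for every $c\in P^\ast$ (since $x\in P=P^{\ast\ast}$) and violated by $c_0$, i.e.\ a separating hyperplane. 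Because $P^\ast$ is again well-described -- its facets are the vertices of $P$ and vice versa, all of encoding length controlled by Lemma~\ref{lem:encoding} -- the previous step applied to $P^\ast$ yields $\text{OPT}_{P^\ast}$, and then $\text{OPT}_{P^\ast}(y)$ decides whether $\max\{y^Tc:c\in P^\ast\}\le 1$, i.e.\ whether $y\in P$, returning an optimal $c$ that separates $y$ from $P$ whenever $y\notin P$.

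I expect the ellipsoid step to be the main obstacle, and within it the treatment of \emph{flat} (lower-dimensional) feasible sets: the central-cut method as stated needs the feasible region to be empty or to contain a ball of known radius, which can fail for $P\cap\{x:c^Tx\ge t\}$ as $t$ approaches $t^\ast$ and possibly for $P$ itself. I would handle this either by tracking the affine hull of the current feasible set -- reducing the working dimension each time an implicit equation is detected -- or by replacing the unknown inequalities of $P$ with a perturbation of explicitly bounded magnitude $\varepsilon=2^{-\mathrm{poly}(\enc{P})}$ and arguing that it does not alter the combinatorial answers; the rounding step~(iv) is the other delicate point, but it is routine once Lemma~\ref{lem:encoding} bounds vertex denominators. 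For the complete argument I would refer to~\cite[Chapter~6]{GLS93}.
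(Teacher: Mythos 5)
The paper does not prove this proposition at all; it is imported verbatim from \cite[Theorem~6.4.9]{GLS93}, and your sketch reproduces exactly the argument of that source (trivial $\text{OPT}\rightarrow\text{VIOL}$, ellipsoid method with size bounds from the well-described hypothesis for $\text{SEP}\rightarrow\text{OPT}$, polarity for $\text{VIOL}\rightarrow\text{SEP}$, Diophantine rounding, and careful handling of non-full-dimensionality). So the proposal is correct in outline and takes essentially the same route as the paper's cited reference, with the remaining technical details (affine-hull maintenance, rounding) appropriately deferred to \cite[Chapter~6]{GLS93}.
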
 

For applications in combinatorial optimization, an extremely important
feature is that the runtime does not depend on the number of inequalities
that are needed to describe $P$. Even if this number is exponential in
$d$, the three oracles are polynomial-time equivalent.

Another important corollary is that linear programs can be solved in
polynomial-time. Indeed, an explicitly given (bounded coefficient) system
$Ax\leq b, x\in\RR^d$ of inequalities defines a well-described polytope $P$,
for which the separation oracle is very easy to implement in time
polynomial in $\enc{P}$; hence, the oracle polynomial-time algorithm for
$\text{OPT}_P(c)$ becomes a (proper) polynomial-time algorithm.

\section{Computing the edge-skeleton}\label{sect:algorithms}

This section studies total polynomial time algorithms for the edge-skeleton of a polytope.
We are given a well-described polytope $P\subseteq\RR^d$ via an
optimization oracle $\text{OPT}_P(c)$ of $P$. Moreover, we are given a
superset $D$ of all edge directions of $P$; to be precise, we define
\[
D(P):=\left\{\frac{v-w}{\|v-w\|}:
	\mbox{$v$ and $w$ are adjacent vertices of $P$}\right\}
\]
to be the set of (unit) edge directions, and we assume that for every
$e\in D(P)$, the set $D$ contains some positive multiple $te, t\in\RR,
t>0$. Slightly abusing notation, we write $D\supseteq D(P)$.

The goal is to efficiently compute the edge-skeleton of $P$, i.e.\ its
vertices and the edges connecting the vertices. Even if $D=D(P)$, this
set does not, in general, provide enough information for the task, so
we need additional information about $P$; here we assume an
optimization oracle. 

Vertex enumeration with this input has been studied in the real RAM
model of computation where we count the number of arithmetic operations:

\begin{proposition}{\rm \cite{RothblumOnn}}\label{pr:rothblum_onn}
Let $P\subseteq \RR^d$ be a polytope given by OPT$_P(c)$,
and let $D\supseteq D(P)$ be a superset of the edge directions of $P$.
The vertices of $P$ are computed using
$O(|D|^{d-1})$ arithmetic operations and 
$O(|D|^{d-1})$ calls to $\text{OPT}_P(c)$.
\end{proposition}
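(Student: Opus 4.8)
The plan is to reduce vertex enumeration of $P$ to enumerating the full-dimensional cells of a central hyperplane arrangement built from $D$, and to recover one vertex of $P$ per cell by a single oracle call.

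First I would form the zonotope $Z=\sum_{v\in D}[\mathbf{0},v]$, or equivalently the central arrangement $\mathcal{H}=\{v^{\perp}:v\in D\}$ in $\RR^{d}$: the full-dimensional cells of $\mathcal{H}$ are exactly the full-dimensional cones of the normal fan of $Z$, and they are in bijection with the vertices of $Z$. The key geometric fact is that the normal fan of $P$ is a coarsening of the normal fan of $Z$. Indeed, each codimension-one cone of the normal fan of $P$ corresponds to an edge $e$ of $P$ and is contained in $e^{\perp}$; since $e\in D(P)\subseteq D$, this hyperplane belongs to $\mathcal{H}$. Hence the boundary of every full-dimensional cone $N_{P}(w)$ of the normal fan of $P$ (the set of objective vectors uniquely maximized at the vertex $w$) lies in $\bigcup\mathcal{H}$, so each open cell of $\mathcal{H}$ is either entirely inside or entirely outside $N_{P}(w)$; that is, $N_{P}(w)$ is a union of full-dimensional cells of $\mathcal{H}$.

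The algorithm then is: enumerate the full-dimensional cells of $\mathcal{H}$; for each cell $C$ pick a vector $c$ in its interior; call $\text{OPT}_{P}(c)$ and record the returned point. Because $c$ lies strictly inside a cell of $\mathcal{H}$, and that cell is contained in the interior of a unique cone $N_{P}(w)$ by the coarsening fact, the linear functional $c^{T}x$ attains its maximum over $P$ at the single vertex $w$; thus the oracle returns $w$ and no lower-dimensional face is optimal, so the output is well defined. For correctness note that every vertex $w$ of $P$ has a full-dimensional normal cone, which contains at least one cell of $\mathcal{H}$; the $c$ chosen from that cell forces the oracle to output $w$. Therefore, after discarding duplicates, the recorded points are exactly the vertices of $P$. (Zero vectors or repeated directions in $D$ only coarsen $\mathcal{H}$ and are removed in a trivial preprocessing step.)

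Finally, for the complexity: a central arrangement of $m=|D|$ hyperplanes in $\RR^{d}$ has $O(m^{d-1})$ full-dimensional cells (equivalently, $Z$ has $O(m^{d-1})$ vertices), and these cells together with an interior point of each can be produced by a reverse-search traversal of the cell-adjacency graph, spending work polynomial in $d$ per cell; this yields the claimed $O(|D|^{d-1})$ arithmetic operations and, since each cell triggers exactly one call to $\text{OPT}_{P}(c)$, the claimed $O(|D|^{d-1})$ oracle calls. The part needing the most care is the coarsening claim together with the observation that a vector chosen strictly inside a cell has a \emph{unique} maximizing vertex in $P$ — this is what makes one oracle call per cell both sufficient and unambiguous; the cell enumeration itself is a standard arrangement-traversal routine.
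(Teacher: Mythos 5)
Your argument is correct and is essentially the approach the paper attributes to Rothblum--Onn and sketches in the introduction: compute the zonotope of $D$ (equivalently, the cells of the central arrangement $\{v^{\perp}: v\in D\}$, which refine the normal fan of $P$ because every wall of that fan lies in some $e^{\perp}$ with $e\in D(P)\subseteq D$), pick a vector interior to each cell, and make one call to $\text{OPT}_P$ per cell, giving $O(|D|^{d-1})$ calls and arithmetic operations. The paper itself only cites this result, so your write-up is a faithful reconstruction of the same proof rather than a different route.
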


If $P$ has $n$ vertices, then $|D(P)|\leq \binom{n}{2}$, and this
is tight for neighborly polytopes in general position~\cite{Ziegler}. This
means that the bound of Proposition~\ref{pr:rothblum_onn} is
$O\left(n^{2d-2}\right)$, assuming that $|D|=\Theta(|D(P)|)$.

We show below that the edge-skeleton can be computed in oracle
total polynomial-time for a well-described polytope, which possesses an
(unknown) H-representation with encoding size
$\varphi$.  Thus, we show that the
exponential dependence on $d$ in Proposition~\ref{pr:rothblum_onn} can be
removed in the Turing machine model of computation, leading to a
(weakly) total polynomial-time algorithm. It remains open whether there is
also a strongly total polynomial-time 
algorithm with a total polynomial runtime
bound in the real RAM model of computation.

\begin{algorithm}[t]
 \BlankLine
  \Input{Optim. oracle $\text{OPT}_P(c)$,
superset $D$ of edge directions $D(P)$} 
  \Output{The edge-skeleton (and vertices) of $P$}
  \BlankLine\BlankLine
  Compute some vertex $v_0\in P$\;
  $V_{P}\leftarrow \emptyset;\, W\leftarrow \{v_0\}$;\, $E_{P}\leftarrow \emptyset$\;
  \While{$W\neq\emptyset$}{
    Choose the next element $v\in W$ and remove it from $W$\;
    $V_{P}\leftarrow V_{P}\bigcup \{v\}$\;
    $V_{cone}\leftarrow \emptyset$\;
    \ForEach{$e\in E$}{
      $w\leftarrow\mathrm{argmax}\{v+te\in P, t\geq 0\}$\;
      \If{$w\neq v$}{
         $V_{cone}\leftarrow V_{cone}\bigcup \{w\}$\;
      }
    }
    Remove non-vertices of $P$ from $V_{cone}$\; 
    \ForEach{$w\in V_{cone} $}{
      \lIf{$w\notin V_{P}$}{$W\leftarrow W\bigcup\{w\}$}
      \lIf{$\{v,w\}\notin E_{P}$}{$E_P\leftarrow E_P\bigcup \{v,w\}$}
    }
  } 
  \Return $V_P,E_P$\;
  \BlankLine
  \caption{\label{Alg:edge_skeleton} Edge\_Skeleton\, ($\text{OPT}_P, D$)}
\end{algorithm} 

The algorithm (Algorithm~\ref{Alg:edge_skeleton}) is as follows.
 Using $\text{OPT}_P(c)$, we find some
vertex $v_0$ of $P$ (this can be done even if
$\text{OPT}_P(c)$ does not directly return a vertex
\cite[Lemma~6.51]{GLS93}, \cite[pp.~255--256]{EPL82}).

We maintain sets $V_P, E_P$ of vertices and their incident edges,
along with a queue $W\subseteq V_P$ of the vertices for which we have
not found all incident edges yet. Initially, $W=\{v_0\},
V_P=E_P=\emptyset$.
When we process the next vertex $v$ from the queue, it remains to find
its incident edges: equivalently, the neighbors of $v$. 
To find the neighbors, we first build a set $V_{cone}$ of candidate
vertices. We know that for every neighbor $w$ of $v$, there must be an
edge direction $e$ such that $w=v+te$ for suitable $t>0$.
More precisely, $w$ is the point corresponding to maximum $t$ in the
$1$-dimensional polytope $Q(e):=P\cap\{x\ |\  x=v+te, t\geq 0\}$, 
where the latter equals the intersection of $P$ with the
ray in direction $e$ and apex at $v$.
Hence, by solving $|D|$ linear programs, one for
every $e\in D$, we can build a set $V_{cone}$ that is guaranteed
to contain all neighbors of $v$. To solve these linear programs,
we need to construct optimization oracles for $Q(e)$. To do this,
we first construct $\text{SEP}_P(y)$ from $\text{OPT}_P(c)$ in oracle
polynomial-time according to~Proposition~\ref{PropEquival}. Thus,
the construction of $\text{SEP}_{Q(e)}(y)$ is elementary, and since
also $Q(e)$ is well-described, we can obtain
$\text{OPT}_{Q(e)}(c)$ in oracle polynomial-time.

In a final step, we remove the candidates that do not yield neighboring
vertices. For this, we first solve a linear program to compute a
hyperplane $h$ separating $v$ from the candidates in $V_{cone}$; since
$V_{cone}$ is a finite subset of $P\setminus\{v\}$, such a hyperplane
exists, and w.l.o.g.\ $v=\mathbf{0}$ and $h=\{x\ |\ x_d=1\}$. Let $C$ be the cone
generated by the set $V_{cone}$. We compute the extreme points of
$C\cap\{x_d=1\}$, giving us the extremal rays of $C$.  Finally, we
remove every point from $V_{cone}$ that is not on an extremal ray.

The correctness of the algorithm relies on the following Lemma.

\begin{lemma}\label{lem:extreme_rays}
  Let $v$ be a vertex of $P$ processed during
  Algorithm~\ref{Alg:edge_skeleton}, where we assume w.l.o.g.\ that $v=\mathbf{0}$
  and the set $V_{cone}$ of candidates is separated from $v$ by the
  hyperplane $\{x\ |\  x_d=1\}$.

  A point $w\in \RR^d$ is a neighbor of $v$ if and only if $w$ is on
  some extremal ray of the cone $C$ generated by $V_{cone}$. Here, an
  extremal ray is a ray whose intersection with the hyperplane
  $\{x_d=1\}$ is an extreme point of the polytope $C\cap\{x\ |\ x_d=1\}$.
\end{lemma}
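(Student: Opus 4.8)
The plan is to show that the cone $C$ coincides with the tangent cone of $P$ at $v$, that is $C=T$ where $T:=\mathrm{cone}(P)=\{tx:x\in P,\ t\ge 0\}$ (recall $v=\mathbf 0$), and then to invoke the standard fact that the extreme rays of $T$ are exactly the rays $\mathbb R_{\ge 0}(w-v)$ spanned by the neighbors $w$ of $v$.

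First I would record two properties of the pruned set $V_{cone}$ produced by the algorithm. \emph{(i)} Every element of $V_{cone}$ is a vertex of $P$ distinct from $v$; this is precisely what the ``remove non-vertices'' step enforces. \emph{(ii)} $V_{cone}$ contains every neighbor of $v$: if $w$ is a neighbor, the segment $[v,w]$ lies on the ray from $v$ in the unit direction $e:=(w-v)/\|w-v\|$, and no point of $P$ lies strictly beyond $w$ on that ray — such a point $p$ would exhibit $w$ as a proper convex combination of $v$ and $p$, contradicting that $w$ is a vertex — so the ray meets $P$ exactly in $[v,w]$. Since $e\in D(P)$ and $D$ contains a positive multiple of $e$, the step $w=\mathrm{argmax}\{v+te\in P,\ t\ge 0\}$ of the algorithm returns exactly $w$, which by \emph{(i)} survives pruning.

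Next I would prove $C=T$. The inclusion $C\subseteq T$ is immediate from $V_{cone}\subseteq P$ and $v=\mathbf 0$. For $T\subseteq C$, I would use the structural description $T=\mathrm{cone}\{w-v:w\text{ a neighbor of }v\}$, which together with \emph{(ii)} yields $T\subseteq\mathrm{cone}(V_{cone})=C$. I expect this structural description to be the only real content of the lemma: the inclusion $\supseteq$ is trivial, and for $\subseteq$ one argues that a vector $u\in T$ outside the closed cone $\mathrm{cone}\{w-v:w\text{ neighbor}\}$ would admit a functional $a$ with $a^T(w-v)\le 0$ for all neighbors $w$ but $a^Tu>0$; the first condition makes $v$ a vertex of $P$ maximizing $a^Tx$ (for a linear functional on a polytope, being optimal along every incident edge is global optimality), while the second forces a point of $P$ with strictly larger $a^T$-value, a contradiction. (Alternatively, this is folklore; cf.\ \cite{Ziegler}.) In particular $T$ is a pointed polyhedral cone whose extreme rays are exactly the $\mathbb R_{\ge 0}(w-v)$ with $w$ a neighbor of $v$.

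It remains to match this with the cross-section formulation in the statement. Since $v=\mathbf 0$ is separated from the finite set $V_{cone}$ by $\{x\mid x_d=1\}$, every nonzero vector of $C$ has positive last coordinate, so $u\mapsto u/u_d$ is a bijection between the extreme rays of $C$ and the vertices of the polytope $C\cap\{x_d=1\}$; hence ``$w$ lies on an extremal ray'' in the lemma's sense is equivalent to ``$\mathbb R_{\ge 0}w$ is an extreme ray of $C=T$''. Combining with the previous paragraph: a neighbor $w$ spans an extreme ray of $T$ and thus lies on an extremal ray; conversely, if $w\in V_{cone}$ lies on the extremal ray $\mathbb R_{\ge 0}w_0$ of a neighbor $w_0$, then $w=tw_0$ with $t>0$, and since $w$ and $w_0$ are both vertices of $P$ on the ray through $v$ and $w_0$, we get $w=w_0$ (for $t<1$, $w$ would be interior to the edge $[v,w_0]$; for $t>1$, $w_0$ would be interior to $[v,w]$). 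The ``if'' direction here uses $w\in V_{cone}$, i.e.\ that $w$ is a genuine vertex of $P$ reached along an edge direction — a point such as $2w_0$ also lies on an extremal ray but need not lie in $P$ — so the lemma is to be read with $w$ ranging over the pruned $V_{cone}$, which is how the algorithm applies it.
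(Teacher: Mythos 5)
Your overall framework ($C$ equals the tangent cone of $P$ at $v=\mathbf{0}$, whose extreme rays are spanned by the neighbors of $v$, plus the cross-section correspondence with $C\cap\{x_d=1\}$) is sound, but your property \emph{(i)} is not available at the point where the lemma is applied, and your ``if'' direction leans on it. The lemma is the justification \emph{for} the pseudocode step ``remove non-vertices of $P$ from $V_{cone}$'': as the surrounding text explains, that removal is implemented precisely by keeping the candidates that lie on extremal rays of $C$. Hence the $V_{cone}$ in the lemma is the \emph{unpruned} candidate set, consisting of the points $\mathrm{argmax}\{t\ge 0 : v+te\in P\}$ for $e\in D$; these are boundary points of $P$ but by no means vertices in general (a direction $e\in D$ that is not an edge direction at $v$ can land in the relative interior of a higher-dimensional face). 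Reading the lemma with $w$ ranging over the pruned set, as your closing sentence suggests, makes the argument circular: you would be assuming exactly what the pruning step needs the lemma to certify. Consequently the step ``$w$ and $w_0$ are both vertices of $P$ on the ray through $v$ and $w_0$, hence $w=w_0$'' is not justified as written.

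The gap is local and repairable with material you already have: if $w\in V_{cone}$ lies on the extremal ray $\{tw_0 : t\ge 0\}$ with $w_0$ a neighbor, then the direction $e$ that generated $w$ is a positive multiple of $w_0$, and $w$ is by construction the farthest point of $P$ on that ray; since $w_0$ is a vertex of $P$, no point of $P$ lies strictly beyond $w_0$ on the ray (your own argument inside \emph{(ii)}), so $w=w_0$. With that substitution your proof goes through, and it is in fact a more structural route than the paper's, which proves both directions by exhibiting explicit supporting hyperplanes for the edge $\conv(\{v,w\})$ and for the extreme point $c(w)$; notably, your identification $C=\mathrm{cone}(P)$ makes explicit the inclusion $P\subseteq C$ that the paper's reverse direction invokes without comment. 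One further point to tighten: you cite as folklore that the extreme rays of the tangent cone are \emph{exactly} the neighbor rays, but your separation argument only gives the inclusion of the tangent cone in the cone over the neighbors; the claim that each neighbor ray is genuinely extreme is the content of the paper's forward-direction hyperplane argument and should be argued or cited precisely.
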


\begin{proof}
  Suppose that $w$ is a neighbor of $v$. By construction, 
  $w\in V_{cone}$. Moreover, since $\{v,w\}$ is an edge, there is a
  supporting hyperplane $h=\{a^Tx=0\}$ (recall that $v=\mathbf{0}$) such that
  $a^Tx=0$ for all $x\in\conv(\{v,w\})$ and $a^Tx>0$ for all
  $p\in P\setminus\conv(\{v,w\})$. For each $q\in V_{cone}$, let
  $c(q) =\frac{1}{q_d} q \in C\cap\{x\ |\ x_d=1\}$.  We have $q_d>0$ by 
  construction. Furthermore, $a^Tc(w)=0$ while $a^Tc(q)>0$
  for $q\in V_{cone}$, unless $q\in\conv(\{v,w\})$. In the latter 
  case, $c(q)=c(w)$. Hence, $c(w)$ is the only point $y$ 
  in $C\cap\{x_d=1\}$ such that $a^Ty=0$, and this implies
  that $c(w)$ is an extreme point of $C\cap\{x_d=1\}$. So $w$
  is on some extremal ray of $C$.

  For the other direction, suppose that $w\in V_{cone}$ is 
  on the extremal ray $\{te\ |\ t\in\RR\}$. So $c(w)$ is an extreme
  point of $C\cap\{x\ |\ x_d=1\}$. This means, there exists a vertical
  hyperplane $h=\{a^Tx = \beta\}$ with $a_d=0$ such that $a^Tc(w)=\beta$, and
  $a^Tc(q)>\beta$, for all $q\in V_{cone}$ satisfying $c(q)\neq c(w)$.
  Now define the hyperplane $\overline{h}=\{\overline{a}^Tx =
  0\}$ with $\overline{a}=(a_1,\ldots,a_{d-1},-\beta)$. It follows that
  $\overline{a}^Tq\geq 0$ for all $q\in V_{cone}$, so the positive halfspace of
  $\overline{h}$ contains $C$ and thus also $P$ since $P\subseteq C$. 
  We claim that $\overline{h}\cap P=\conv(\{v,w\})$, which proves that
  $\conv(\{v,w\})$ is an edge of $P$ and hence $w$
  is a neighbor of $v$. 

  To see this, we first observe that $\overline{a}^Tw=0$ and
  $\overline{a}^Tq>0$ for all $q\in V_{cone}$ that are not multiples
  of $w$, so $\overline{h}\cap P\subseteq \overline{h}\cap C = \{te\
  |\ t\in\RR\}$. On the other hand, we know from the construction of
  $V_{cone}$ that $w$ is the highest point of $P$ (the one with
  maximum $t$) on the ray $\{te\ |\
  t\in\RR\}$, so we indeed get $\overline{h}\cap P=\conv(\{v,w\})$.
\end{proof}

We now bound the {\it time complexity} of Algorithm~\ref{Alg:edge_skeleton}.

\begin{theorem}\label{thm:edge-skeleton}
Given $OPT_P$ and a superset of edge directions $D$ of a
well-described polytope $P\subseteq\RR^d$ with $n$ vertices, and $m$ edges 
Algorithm~\ref{Alg:edge_skeleton} computes the edge-skeleton of $P$ in  
oracle total polynomial-time 
\[
O\left( n|D|\left(\OO(\enc{P}+\enc{D}) +
	\LP( d^3|D| \, (\enc{P}+\enc{D})) \, +  d \log n \right)  \right),
\]
where $\enc{D}$ is the binary encoding length of the vector set $D$.
\end{theorem}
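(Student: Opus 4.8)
The plan is to bound the cost of Algorithm~\ref{Alg:edge_skeleton} by multiplying the number of times the main loop body executes by the cost of one iteration, and then separately accounting for the oracle-conversion work. The outer \texttt{while} loop processes each vertex of $P$ exactly once, so it runs $n$ times. Inside one iteration we do three things: (i) for each of the $|D|$ directions $e$ we solve a one-dimensional LP $\max\{t : v+te\in P,\ t\ge 0\}$ to build $V_{cone}$; (ii) we compute a separating hyperplane $h$ and the extreme points of the polytope $C\cap\{x_d=1\}$ to prune $V_{cone}$ down to the genuine neighbors of $v$; (iii) we update $V_P$, $E_P$, and the queue $W$. The $n|D|$ factor in the bound is exactly $n$ iterations times $|D|$ ray-maximizations per iteration, so the content of the proof is showing that each ray-maximization costs $O\!\left(\OO(\enc{P}+\enc{D}) + \LP(d^3|D|\,(\enc{P}+\enc{D})) + d\log n\right)$, and that steps (ii) and (iii) are dominated by this.

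For step (i), as explained in the text we first build $\text{SEP}_P$ from $\text{OPT}_P$ once, at oracle-polynomial cost $\OO(\enc{P})$ (absorbed into the $\OO(\enc{P}+\enc{D})$ term), then construct $\text{SEP}_{Q(e)}$ for the one-dimensional polytope $Q(e)=P\cap\{v+te : t\ge 0\}$ by a trivial reduction, and finally convert back to $\text{OPT}_{Q(e)}$ in oracle-polynomial time via Proposition~\ref{PropEquival}. The key point is that $Q(e)$ is well-described with an encoding parameter that is polynomial in $\enc{P}$ and $\enc{D}$: $v$ is a vertex of $P$, hence by Lemma~\ref{lem:encoding} has encoding length $O(d^2\varphi)$, the direction $e$ has encoding length at most $\nu\le\enc{D}$, and the two inequalities $t\ge 0$ and $t\le t_{\max}$ cutting out $Q(e)$ inside the line $\{v+te\}$ therefore have encoding length polynomial in $d(\enc{P}+\enc{D})$. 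So one ray-maximization costs $\OO(\enc{P}+\enc{D})$ for the oracle conversions plus one LP call, and that LP is over a system whose coefficients have total encoding length $O(d^3|D|(\enc{P}+\enc{D}))$ — this is where the argument of $\LP$ comes from, since the bitsize of an extreme point of $C$ (and hence of the inequalities describing the pruning LP in step (ii)) blows up by another $O(d^2)$ factor over the bitsizes of the points in $V_{cone}$, and $|V_{cone}|\le|D|$.

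For step (ii): $V_{cone}\subseteq P\setminus\{v\}$ is a set of at most $|D|$ explicitly given rational points, so computing a separating hyperplane $h$ (w.l.o.g.\ $\{x_d=1\}$ after an affine change of coordinates) is one LP on an explicitly given system of size $O(d^3|D|(\enc{P}+\enc{D}))$, and extracting the extreme points of $C\cap\{x_d=1\}$ — equivalently, the vertices of the convex hull of $\le|D|$ explicit points in $\RR^{d-1}$ — reduces to $O(|D|)$ further LPs of the same size by Lemma~\ref{lem:extreme_rays}, testing each candidate for extremality. All of these are subsumed, up to the $|D|$ factor already present, in the per-direction bound, so step (ii) does not change the asymptotics. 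The correctness of the pruning is exactly Lemma~\ref{lem:extreme_rays}: the surviving points are precisely the neighbors of $v$. Step (iii) costs $O(d\log n)$ per candidate for the dictionary/membership operations on $V_P$ and $E_P$ (comparing $d$-dimensional rational vectors against a set of size $O(n)$), which accounts for the $d\log n$ summand. Finally we observe that $\enc{D}\ge d$, so the $d\log n$ term is the only place where a genuinely logarithmic (rather than oracle-polynomial) dependence on $n$ enters, and summing over the $n$ iterations and $|D|$ directions gives the stated bound.

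The main obstacle in writing this carefully is the bitsize bookkeeping: one must verify that the encoding length of every rational number handed to an LP solver or an oracle — the vertex $v$, the maximal ray parameter $t_{\max}$ returned by $\text{OPT}_{Q(e)}$, the coordinates $c(q)=q/q_d$ of the lifted candidates, and the coefficients of the separating hyperplane $h$ — stays polynomial in $d,\varphi,\nu$, so that the $\OO(\cdot)$ and $\LP(\cdot)$ arguments are as claimed. Each step loses only a fixed polynomial factor (an $O(d^2)$ from Lemma~\ref{lem:encoding} when passing between vertices and facet inequalities, and an $O(\log|D|)$ from summing $|D|$ rationals in the convex-hull computation), but chaining these through the oracle conversions without the bitsize exploding is the delicate part; everything else is a routine multiplication of loop counts by per-step costs.
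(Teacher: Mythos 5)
Your proposal matches the paper's own proof in both structure and accounting: $n$ iterations, $O(|D|)$ oracle constructions for the well-described polytopes $Q(e)$ via Proposition~\ref{PropEquival}, pruning via $|D|$ linear programs whose coefficient bitsizes are bounded through Lemma~\ref{lem:encoding} (yielding the $\LP(d^3|D|(\enc{P}+\enc{D}))$ argument), correctness from Lemma~\ref{lem:extreme_rays}, and $O(d\log n)$ per newness test over $O(m)=O(n|D|)$ tests. It is correct and takes essentially the same route, with somewhat more explicit attention to the bitsize bookkeeping than the paper itself gives.
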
 

\begin{proof} 
We call $\text{OPT}_P(x)$ to find the first vertex of $P$. Then, there
are $O(n)$ iterations. In each one, we construct $O(|D|)$ oracles for
polytopes $Q(e)$ of encoding length at most $\enc{P}+\enc{D}$. 
We also
compute the (at most $n$) extreme points from a set of at most $|D|$
candidate points. This can be done by solving $|D|$ linear programs
whose inequalities have coefficients that are in turn coordinates of
vertices of the $Q(e)$'s. By~Lemma~\ref{lem:encoding}, these
coordinates have encoding lengths bounded by $4d^2(\enc{P}+\enc{D})$,
and the number of coefficients in each linear program is $O(|D|d)$.
At each vertex we have to test whether the computed vertices and edges 
are new. In the course of the algorithm these tests are at most $O(m)=O(n|D|)$, 
where $m$ the number of $P$ edges. We can test whether a vertex (or an edge) is new in $O(d\log n)$.
\end{proof}

\subsection{Reverse search for edge-skeleton.} 
We define a reverse search procedure based on~\cite{AvisF92} to optimize the 
space used by Algorithm~\ref{Alg:edge_skeleton}. 
Given a vertex of $P$, the set of adjacent edges can be constructed 
as described above. 
Then we need to define a total order over the vertices of the polytope. 
Any generic vector $c\in\RR^d$ induces such an order on the vertices, 
i.e.\ the order of a vertex $u$ is that of $c^Tu$. 
In other words, we can define a reverse search tree on $P$
with root the vertex $v$
that maximizes $c^Tv$ over all the vertices of $P$, where $c$ is
the vector given to OPT$_P$ for initializing $P$.
Technically, the genericity assumption on $c$ can be avoided 
by sorting the vertices w.r.t.\ the lexicographical ordering 
of their coordinates.

Reverse search also needs an {\em adjacency procedure} which,
given a vertex $v$ and an 
integer $j$, returns the $j$-th adjacent vertex of $v$, as well as 
a {\em local search procedure} allowing us to move from any vertex to its 
optimal neighbor w.r.t.\ the objective function. 
Both procedures can be implemented by computing all the adjacent vertices
of a given vertex of $P$ 
as described above, and then returning the best (or the $j$-th)
w.r.t.\ the ordering induced by $c$.

The above procedures can be used by a reverse search variant of 
Algorithm~\ref{Alg:edge_skeleton} that traverses (forward and backward) 
the reverse search tree while keeping in memory only a constant 
number of $P$ vertices and edges. 
On the contrary, both the original Algorithm~\ref{Alg:edge_skeleton} and  
the algorithm of Proposition~\ref{pr:rothblum_onn} 
need to store all vertices of $P$ whose number is exponential in $d$ 
in the worst case. 
Note that any algorithm should use space at least $O(d|D|)$ to store the input 
set of edge directions.  
The above discussion yields the following result 
(encoding length of $P$ vertices comes from Lemma~\ref{lem:encoding}).

\begin{corollary}\label{cor:edge-skeleton-rs}
Given $OPT_P$ and a superset of edge directions $D$, a variant of 
Algorithm~\ref{Alg:edge_skeleton} that uses reverse search 
runs in additional to the input space $O(4d^2\enc{P}+\enc{D})$ 
while keeping the same asymptotic time complexity.
\end{corollary}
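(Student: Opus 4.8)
The plan is to specialise the \emph{reverse search} scheme of~\cite{AvisF92} to the edge-skeleton graph $G$ of $P$, so that the only exponential-space ingredient of Algorithm~\ref{Alg:edge_skeleton} --- the explicit storage of $V_P$, $E_P$ and the queue $W$ --- disappears. First I would pin down the three data reverse search needs. The \emph{root} is the vertex $v_0$ returned by the call $\text{OPT}_P(c)$ that initialises $P$ (after the standard step that turns an optimal point into an optimal vertex, \cite[Lemma~6.5.1]{GLS93}); ordering the vertices of $P$ lexicographically by their coordinate vectors, $v_0$ is the maximum, and for a non-root vertex $u$ the \emph{local search} $f(u)$ is the lexicographically largest among the neighbours $w$ of $u$ with $w$ larger than $u$. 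Since the lexicographic order is total, $f$ is well-defined and strictly increasing, so iterating $f$ from any vertex reaches $v_0$; hence the arcs $\{(u,f(u))\}$ form a spanning tree $T$ of $G$ rooted at $v_0$ (this is why, as noted just before the corollary, the genericity assumption on $c$ can be traded for a lexicographic tie-break). The \emph{adjacency procedure} $\mathrm{adj}(v,j)$ returns the $j$-th neighbour of $v$, or $\bot$ if $v$ has fewer than $j$ neighbours; both $\mathrm{adj}(v,\cdot)$ and $f(v)$ are produced by a single run of the per-vertex subroutine already analysed in Lemma~\ref{lem:extreme_rays} and in the proof of Theorem~\ref{thm:edge-skeleton}, namely: build $V_{cone}$ by solving the $|D|$ ray-maximisation linear programs for $Q(e)$, $e\in D$, extract the extremal rays of $\mathrm{cone}(V_{cone})$ (these are exactly the neighbours of $v$, by Lemma~\ref{lem:extreme_rays}), and sort the resulting neighbour list.

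With these primitives the traversal is the textbook one. From the current pair $(v,j)$ compute $w=\mathrm{adj}(v,j)$; if $w\neq\bot$ and $f(w)=v$, take the forward (``reverse'') step $v\leftarrow w$, $j\leftarrow 1$, and output the vertex $w$; when the neighbours of $v$ are exhausted and $v\neq v_0$, backtrack by $v\leftarrow f(v)$ and restore $j$ to the index with $\mathrm{adj}(f(v),j)=v$, continuing from $j+1$; halt when a backtrack from $v_0$ is attempted. To obtain the whole edge-skeleton and not just $T$, whenever we examine a neighbour $w$ of $v$ we emit the edge $\{v,w\}$ iff $v$ is lexicographically larger than $w$, so each edge is produced exactly once, from its larger endpoint. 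No recursion stack is used, so at any instant the algorithm stores only $v$, $f(v)$, $w$, a few indices, and the transient buffer of one call to the neighbour-finding subroutine; by Lemma~\ref{lem:encoding} each of those vertices (found as a vertex of a well-described $Q(e)$) has encoding length $O(4d^2(\enc{P}+\enc{D}))$ as in the proof of Theorem~\ref{thm:edge-skeleton}, and the transient buffer --- the $\le|D|$ candidate points and the linear programs over them --- is no larger than the working space of one iteration of Algorithm~\ref{Alg:edge_skeleton}. This yields the additional space claimed, on top of the $O(d|D|)$ that merely holding the input $D$ requires anyway.

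For the running time I would invoke the standard reverse-search accounting: over the whole traversal the forward scans cost $O(|D|)$ adjacency queries per vertex, and the index-restoring re-scans after backtracking cost $\sum_{(w,f(w))\in T}\mathrm{index}_{f(w)}(w)\le(n-1)|D|=O(n|D|)$ queries in total, so reverse search performs $O(n|D|)$ adjacency queries and $O(n)$ local searches, matching the number of neighbour-finding steps implicit in the proof of Theorem~\ref{thm:edge-skeleton}; moreover the $O(m)$ novelty tests of that proof are not needed here, since no vertex or edge is ever revisited. The delicate point, and the step I expect to be the real obstacle, is showing that the per-query cost is not inflated: a naive $\mathrm{adj}(v,j)$ recomputes the full neighbour list of $v$ (an $O(|D|)$-linear-program job) on every query, which would cost an extra $|D|$ factor, so one must either cache the current vertex's neighbour list --- it has only $O(|D|)$ entries, which fits the space budget --- and serve all $\mathrm{adj}(v,\cdot)$ and $f(v)$ queries from that cache, or carry out the amortisation directly against the work already charged in Theorem~\ref{thm:edge-skeleton}. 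It is exactly this bookkeeping that lets reverse search replace Algorithm~\ref{Alg:edge_skeleton}'s worst-case exponential memory by $O(4d^2(\enc{P}+\enc{D}))$ extra space with no asymptotic penalty in time.
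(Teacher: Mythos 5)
Your proposal follows essentially the same route as the paper: a reverse-search traversal in the sense of~\cite{AvisF92} whose root, adjacency and local-search procedures are all realised by the per-vertex neighbour computation of Algorithm~\ref{Alg:edge_skeleton} (via Lemma~\ref{lem:extreme_rays}), with the lexicographic order replacing a generic objective and the space bound coming from the vertex encoding length of Lemma~\ref{lem:encoding}. You in fact supply more of the traversal and amortisation bookkeeping (caching the current neighbour list, charging re-scans after backtracking) than the paper's brief discussion does, but the underlying argument is the same.
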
 

\section{Applications}\label{sect:applic} 

This section studies the performance of Algorithm~\ref{Alg:edge_skeleton} in certain classes of polytopes where we do not assume that we know the set of edge directions a priori. To this end, we describe methods for pre-computing a (super)set of the edge directions. 
 
We start by describing the computation of the set of edge-directions in arbitrary polytopes using the formulation of standard polytopes.
Then we study two important classes of polytopes where the number of edge directions can be efficiently precomputed and thus provide new, total polynomial-time algorithms for their representation by an edge-skeleton. In particular, we study 
signed Minkowski sums, and resultant and secondary polytopes.
We show that these polytopes are well-described
and naturally defined by optimization 
oracles, which provide a compact representation.  

\subsection{Standard polytopes}
First we discuss the performance of Algorithm~\ref{Alg:edge_skeleton} on general polytopes. 
Any convex polytope $P=\{x\ |\ Ax\leq b\}$ can be written as a linear projection of a polytope $Q=\{(x,y)\ |\ Ax+Iy=b, y\geq 0\}$, where $A\subseteq\RR^{m\times d}$ and we introduce the slack variables $y\in\RR^m$ and $P=\pi(Q)$, by the linear mapping $\pi(x,y)=x$. We can rewrite $Q$ as the so-called {\it standard polytope} $\{x'\ |\ Bx'=b, x'\geq 0\}$. The set $E$ of edge directions of $Q$ is covered by the set of circuits of $B$ (Cf.\ Lemma 2.13.\ in~\cite{OnnR04}). Moreover, each edge direction of $P$ is the projection of some direction in $E$ under the mapping $\pi$ (Cf.\ Lemma 2.4.\ in~\cite{OnnR04}). However, 
the number of circuits of $B$ will be exponential in $d$.   
On the other hand, for small dimensions Algorithm~\ref{Alg:edge_skeleton} could be an efficient choice for edge-skeleton computation or vertex enumeration. 

\subsection{Signed Minkowski sums} 
Recall that the {\it Minkowski sum} of (convex) polytopes $A,B\subseteq\RR^d$ is defined as 
$$
A+B:=\{a+b\ |\ a\in A,b\in B\}.
$$
Following~\cite{Schneider93} the {\it Minkowski subtraction} is defined as 
$$A-B:=\{x\in\RR^d\ |\ B+x\subseteq A\}.$$ 
Here we consider a special case of Minkowski subtraction where $B$ is a summand of $A$. Equivalently, if $A-B=C$ then $A=B+C$.
A {\em signed Minkowski sum} combines Minkowski sums and subtractions,
namely
$$
P = s_1P_1+ s_2P_2 + \dots+ s_rP_r, \; s_i\in\{-1,1\},
$$
where all $P_i$ are convex polytopes and so is $P$.

We also define the sum (or subtraction) of two optimization oracles as the Minkowski 
sum (or subtraction) of the resulting vertices. In particular, if OPT$_P(c)=v$ and 
$\OPT_{P'}(c')=v'$ for $v,v'$ vertices of $P,P'$ respectively, then $\OPT_P(c)+\OPT_{P'}(c)=v+v'$ and $\OPT_P(c)-\OPT_{P'}(c)=v-v'$.
An optimization oracle for the signed Minkowski sum is given by the signed sum of the optimization oracles of the summands.

\begin{lemma}\label{lem:Mink_sum_oracle}
If $P_1,\dots,P_r\subset\RR^d$ are given by optimization oracles, 
then we compute an optimization oracle
for signed Minkowski sum $P=\sum_{i=1}^r s_iP_i$ in $O(r)$.
\end{lemma}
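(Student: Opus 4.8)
The plan is to prove this by reducing the optimization problem on the signed Minkowski sum to a single call of each summand's oracle, exploiting the well-known fact that linear functionals distribute over Minkowski sums. First I would recall the key identity: for any $c\in\RR^d$, $\max_{x\in A+B} c^Tx = \max_{a\in A} c^Ta + \max_{b\in B} c^Tb$, and moreover a maximizer of the left-hand side is obtained as $a^\star + b^\star$ where $a^\star, b^\star$ are maximizers on $A$ and $B$ respectively. I would state and briefly justify the analogous statement for the restricted Minkowski difference: if $A = B + C$ (so $C = A - B$ in the sense defined above), then for any $c$, $\max_{x\in C} c^Tx = \max_{x\in A} c^Tx - \max_{x\in B} c^Tx$, and a maximizer on $C$ is recovered as $\OPT_A(c) - \OPT_B(c)$. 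This is exactly the definition of the sum/subtraction of oracles given just before the lemma, so the content is to verify it gives a valid optimization oracle for $C$ and $A+B$.

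Next I would assemble these two building blocks over the signed sum $P = \sum_{i=1}^r s_i P_i$. Given an objective vector $c$, the algorithm queries each summand oracle once, obtaining $v_i = \OPT_{P_i}(c)$ for $i=1,\dots,r$, and returns $v := \sum_{i=1}^r s_i v_i$. By induction on $r$ using the two identities above, $c^Tv = \sum_i s_i \max_{x\in P_i} c^Tx = \max_{x\in P} c^Tx$, and $v\in P$ because $v$ is the signed sum of vertices of the $P_i$ and $P$ is by hypothesis a (convex) polytope equal to this signed sum; in particular $v$ is a point of $P$ attaining the optimum, which is what $\OPT_P$ must return. Since $P$ is nonempty (it is a polytope), the emptiness branch never triggers. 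The total work is $r$ oracle calls plus $O(r)$ arithmetic for forming the signed sum of the returned vectors, giving the claimed $O(r)$ bound in the oracle Turing machine model (the cost of the individual oracle calls is absorbed into the oracle-call count, consistently with the rest of the paper).

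The one point requiring genuine care — and I expect it to be the main obstacle — is the Minkowski-subtraction step: the identity $\max_{x\in A-B}c^Tx = \max_{x\in A}c^Tx - \max_{x\in B}c^Tx$ is \emph{not} valid for arbitrary polytopes $A,B$, only under the standing assumption that $B$ is a summand of $A$ (equivalently $A = B+C$ with $C$ convex), which the paper builds into the definition of a signed Minkowski sum. So I would be explicit that we use the hypothesis ``$P$ is a convex polytope'' and that each partial signed sum appearing in the induction is itself convex — or, more cleanly, rewrite $P = s_1P_1 + \dots + s_rP_r$ in the normalized form ``sum of the positive-sign summands equals sum of the negative-sign summands plus $P$'' (as in the introduction, $P - Q + R - S = T \Leftrightarrow P + R = Q + S + T$), so that only genuine Minkowski \emph{sums} appear and the subtraction identity is invoked exactly once at the level of the whole expression rather than repeatedly. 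With that reorganization the induction is on the number of summands on each side and uses only the (unconditional) sum identity plus one application of the difference identity, avoiding any convexity subtlety for intermediate terms.
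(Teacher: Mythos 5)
Your proposal is correct and follows essentially the same route as the paper: the paper's proof likewise normalizes the signed sum to ``positive summands $=$ negative summands $+\;P$'' so that only genuine Minkowski sums appear, and then uses the fact that a linear maximizer over a Minkowski sum decomposes as the sum of maximizers of the summands (justified there via normal-cone containment, $norm_P(v)\subseteq norm_{P_i}(v_i)$), yielding $r$ unit-cost oracle calls and the $O(r)$ bound. Your explicit flagging of the subtraction subtlety and its resolution by the one-shot normalization matches the paper's argument.
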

\begin{proof} 
Assume w.l.o.g.\ that $s_1=\cdots=s_k=1\ne s_{k+1}=\cdots=s_r=-1$.
Then, given $P=\sum_{i=1}^r s_iP_i$ we have 
$P+\sum_{i=k+1}^r P_i=\sum_{i=1}^k P_i=P'$. 
Let OPT$_{P'}(c)=v$ for some vertex $v$ of $P'$ and vector $c\in\RR^d$. 
It suffices to show that 
\begin{align*}
\mbox{OPT}_{P'}(c)=v&=v_1+\dots+v_k=\sum_{i=1}^k\mbox{OPT}_{P_i}, 
\end{align*} 
which follows from Minkowski sum properties: $v=v_1+\dots+v_k$ for vertices 
$v_i$ of $P_i$ and $norm_P(v)\subseteq norm_{P_i}(v_i)$, for $i=1\dots k$. 
Here $norm_P(v)$ denotes the {\it normal cone} of vertex $v$ of $P$, i.e.\ the set of all vectors $c$ such  that $c^Tx\leq c^Tv$ for all $x\in P$.
Therefore, we can compute OPT$_P$ with $r$ oracle calls to OPT$_{P_i}$ 
for $i=1,\dots,r$.
This yields a complexity of $O(r)$ for $\OPT_P$ since,
by definition of oracle polynomial-time, the oracle calls
in every summand are of unit cost.
\end{proof}

\begin{figure}[t]\centering
\includegraphics[width=\textwidth]{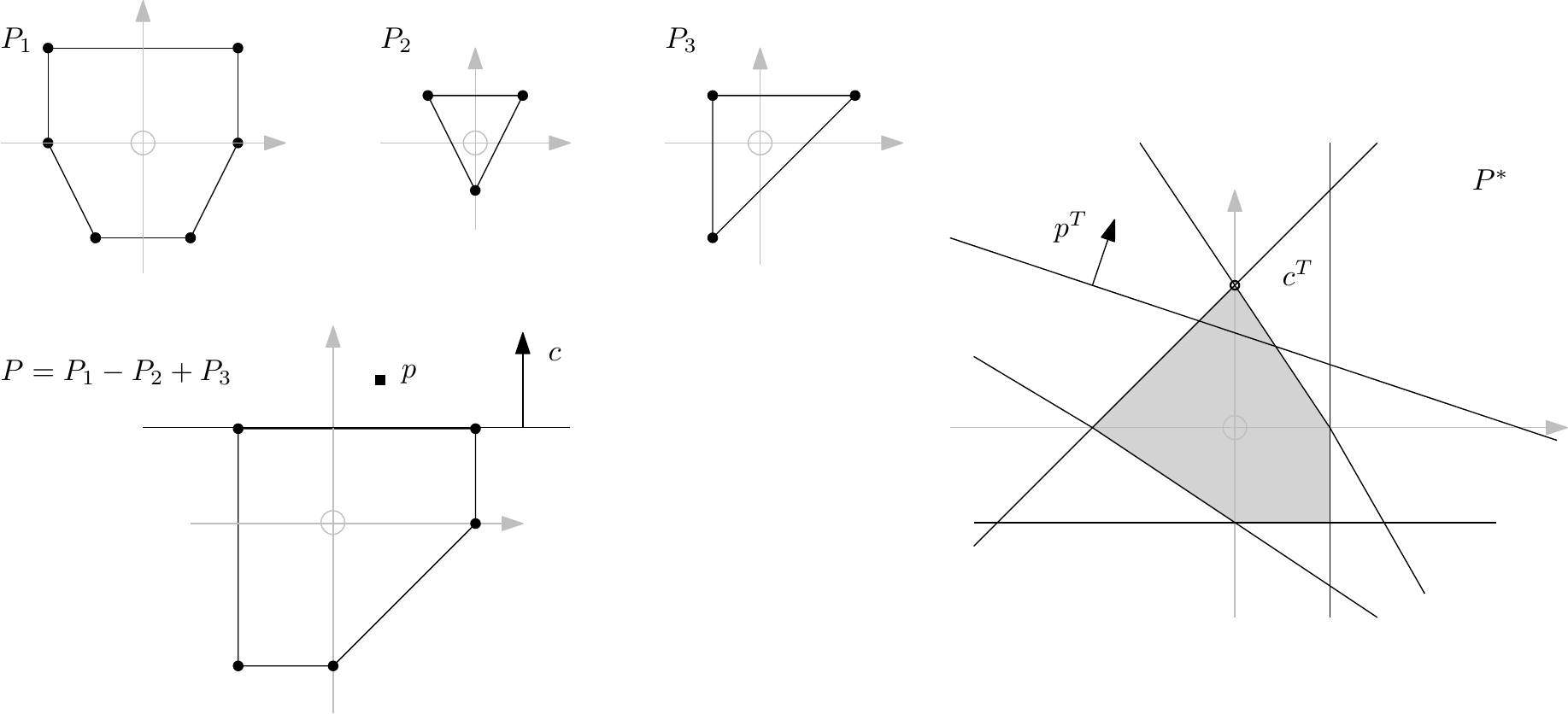}
\caption{\small Signed Minkowski sum oracles.
\label{fig:Mink_sum}} \end{figure}

\begin{example}\label{eg:minksum}
Here we illustrate the above definitions and constructions as well as the standard reductions from~\cite{GLS93}.
Consider the polytopes $P_1,P_2,P_3$, their signed Minkowski sum
$P=P_1-P_2+P_3$, and its polar $P^*$ as shown in Figure~\ref{fig:Mink_sum}.
Observe that $P_1=P_2+ S$, where $S$ is a square.
Assume that $P_1,P_2,P_3$ are given by $\text{OPT}_{P_1},\text{OPT}_{P_2},$ $\text{OPT}_{P_3}$ oracles. 

Then, $\text{OPT}_P(c)=\text{OPT}_{P_1}(c)-\text{OPT}_{P_2}(c)+\text{OPT}_{P_3}(c)$ for some vector $c\in\RR^d$.
If $P$ satisfies the requirements of Proposition~\ref{PropEquival} then,
having access to OPT$_P(c)$, we compute $\text{SEP}_P(p)$ in oracle 
polynomial-time for point $p\in\RR^d$.
In particular, asking if $p\in P$ is equivalent to asking if 
$H:= \{x\ |\ p^Tx\leq 1\}$ is a valid inequality for $P^*$. The latter 
can be solved by computing the point $c^T$ in $P^*$ that maximizes the inner 
product with the outer normal vector of $H$ and test if it validates $H$.
If this happens then $\text{SEP}_P(p)$ returns that $p\in P$, otherwise it returns 
$p\notin P$ with separating hyperplane $\{x\ |\ c x=1\}$.
\end{example}

Let $n$ denote the number of vertices of $P$.
An oracle for $P$ is provided by Lemma~\ref{lem:Mink_sum_oracle}.
Then, the entire polytope can be reconstructed,
and both V- and H-representations can be found by Proposition~\ref{BB}. 

\begin{proposition}{\rm\cite{EFKP12}}\label{BB}
Given OPT$_P$ for $P\subseteq \RR^d$, its V- and H-representations
as well as a triangulation $T$ of $P$ can be computed in
$$
O(d^5n|T|^2)\text{ arithmetic operations, and } O(n+f) \text{ calls to OPT}_P,
$$
where $n$ and $f$ are the number of vertices and facets of $P$, respectively,
and $|T|$ the number of $d$-dimensional simplices of $T$.
\end{proposition}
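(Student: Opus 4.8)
The plan is to run an oracle-driven Beneath--Beyond (incremental convex hull) procedure. First I would construct a full-dimensional simplex inside $P$: take $v_0=\text{OPT}_P(c)$ for an arbitrary $c$ and, given affinely independent vertices $v_0,\dots,v_i$, query $\text{OPT}_P(\pm c)$ for directions $c$ spanning the orthogonal complement of their affine hull; each answer either falls outside the current affine hull and becomes the next vertex, or the collection of answers certifies $\dim P$ together with a defining equation. This costs $O(d)$ oracle calls and $O(d^4)$ arithmetic operations; from now on assume $\dim P=d$, with $P_0=\conv(v_0,\dots,v_d)$ and its (trivial) H-representation and triangulation in hand.

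Next I would run the main loop, maintaining the current inner approximation $P_i=\conv(V_i)$, its facet list $\mathcal F_i$ (each facet $F$ stored with its outward hyperplane $c_F^Tx\le b_F$), and a triangulation $T_i$. For each not-yet-\emph{confirmed} facet $F$ I call $w=\text{OPT}_P(c_F)$ (assuming, as we may, that the oracle returns an optimal vertex). Since $F\subseteq P_i\subseteq P$ we always get $c_F^Tw\ge b_F$. If $c_F^Tw>b_F$, then $w\notin V_i$ lies strictly beyond $F$: insert $w$ by one Beneath--Beyond step (classify each facet of $P_i$ as visible/coplanar/hidden with respect to $w$, discard the visible ones, create one new facet per horizon ridge, cone $w$ over the triangulated visible boundary to extend $T_i$) and restart the scan. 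If $c_F^Tw=b_F$, the hyperplane of $F$ supports $P$; as $F$ is $(d-1)$-dimensional and $F\subseteq P$, the face $\{x\in P:c_F^Tx=b_F\}$ is a facet of $P$ containing $F$, so I permanently mark $F$ confirmed. The loop stops once a full scan finds no beyond-vertex, i.e.\ all facets of $P_i$ are confirmed; then $P_i=\bigcap_F\{c_F^Tx\le b_F\}\supseteq P$, and since always $P_i\subseteq P$ we conclude $P_i=P$, so $(V_i,\mathcal F_i,T_i)$ are the desired V-, H-representations and triangulation.

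For the call count, the key observation I would use is that a confirmed facet is never destroyed: its hyperplane supports $P$, so no later vertex of $P$ is strictly beyond it, hence it is never visible in a Beneath--Beyond step (it may only be enlarged by an absorbed coplanar vertex, which leaves its hyperplane unchanged). Therefore distinct confirmations have distinct supporting hyperplanes, each the affine hull of a distinct facet of $P$, so there are at most $f$ confirming calls overall, while each ``beyond'' call is charged bijectively to one of the $n-d-1$ vertex insertions; in total $O(n+f)$ oracle calls. The arithmetic cost is dominated by the Beneath--Beyond updates: visibility classification and horizon extraction cost $O(d)$ per facet, and each new supporting hyperplane comes from a $d\times d$ null-space/determinant computation; since $|\mathcal F_i|=O(|T_i|)=O(|T|)$ and only $O(|T|)$ simplices are ever created, a careful implementation --- locating and re-coning the visible simplices, updating adjacencies --- charges everything to the $O(n)$ insertions and yields the stated $O(d^5n|T|^2)$ bound, exactly as analysed in~\cite{EFKP12}.

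The hard part is the $O(n+f)$ oracle bound: a naive count gives $\sum_i|\mathcal F_i|$ calls, and beating this down rests entirely on the ``confirmed facets persist'' argument above together with the termination claim that once every facet is confirmed we already have $P_i=P$. A secondary, purely combinatorial obstacle is the Beneath--Beyond bookkeeping for the triangulation --- tracking which simplices lie on visible facets and re-coning them from the new vertex without recomputing the whole complex --- which is where the $d$- and $|T|$-dependence of the arithmetic bound is incurred.
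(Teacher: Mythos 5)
Your proposal is essentially the algorithm of the cited source: the paper gives no proof of this proposition but imports it from~\cite{EFKP12}, whose method is exactly this oracle-driven Beneath--Beyond scheme in which each oracle call either yields a vertex strictly beyond an unvalidated facet (charged to one of the $n$ insertions) or validates a supporting hyperplane of a facet of $P$ (charged to one of the $f$ facets), giving the $O(n+f)$ call bound, with the triangulation maintained incrementally to yield the stated arithmetic cost. So your reconstruction is correct and matches the intended argument, with only the routine bookkeeping of the $O(d^5n|T|^2)$ bound deferred to the reference.
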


\begin{corollary}
  Given optimization oracles for $P_1,\dots, P_r\subseteq \RR^d$, 
  we construct the V- and H-representations, and a triangulation $T$ of signed Minkowski sum 
  $P=P_1+s_2P_2 + \cdots + s_rP_r, \, s_i\in\{-1,1\}$
  in output sensitive complexity, namely
  $O(d^5n|T|^2+(n+f)r)$, where $n,f$ are the number of
  vertices and facets in $P$ and $|T|$ the number of full-dimensional simplices of $T$.
\end{corollary}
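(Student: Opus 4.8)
The plan is to derive this as an immediate consequence of Lemma~\ref{lem:Mink_sum_oracle} and Proposition~\ref{BB}, with no new geometric input. First I would invoke Lemma~\ref{lem:Mink_sum_oracle}: since each $P_i$ is given by an optimization oracle $\text{OPT}_{P_i}$, we obtain an optimization oracle $\text{OPT}_P$ for the signed Minkowski sum $P=P_1+s_2P_2+\cdots+s_rP_r$, where one evaluation of $\text{OPT}_P$ amounts to $r$ evaluations of the summand oracles, i.e.\ costs $O(r)$ in the model where each summand call is of unit cost. Crucially, $P$ is a convex polytope by the hypothesis on signed Minkowski sums, so this is a genuine optimization oracle of the kind required downstream.

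Second, I would feed $\text{OPT}_P$ into the Beneath-Beyond reconstruction of Proposition~\ref{BB}. That result computes the V- and H-representations of $P$ together with a triangulation $T$ using $O(d^5 n|T|^2)$ arithmetic operations and $O(n+f)$ calls to $\text{OPT}_P$, where $n$ and $f$ are the numbers of vertices and facets of $P$ and $|T|$ is the number of full-dimensional simplices of $T$. Substituting the $O(r)$ cost per oracle call from the previous step, the $O(n+f)$ calls contribute $O((n+f)r)$; adding the arithmetic cost yields the claimed bound $O(d^5 n|T|^2+(n+f)r)$. The estimate is output-sensitive precisely because Proposition~\ref{BB} is, and because the per-call cost $O(r)$ depends only on the number of summands, not on their sizes or facet counts.

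The only delicate point --- and the main thing I would verify rather than a genuine obstacle --- is that the interface of $\text{OPT}_P$ matches the hypotheses of Proposition~\ref{BB}: one wants the oracle to return a vertex of $P$, not merely an optimal point, but this is handled exactly as in the main algorithm, since an optimal point can be converted into a vertex of $P$ by the standard argument of \cite[Lemma~6.51]{GLS93} without changing the asymptotic cost. With that in place the corollary follows by the two substitutions above.
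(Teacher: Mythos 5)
Your proposal matches the paper's (implicit) argument exactly: obtain $\text{OPT}_P$ from the summand oracles via Lemma~\ref{lem:Mink_sum_oracle} at cost $O(r)$ per call, then apply the Beneath-Beyond reconstruction of Proposition~\ref{BB}, so the $O(n+f)$ oracle calls cost $O((n+f)r)$ and the arithmetic cost $O(d^5n|T|^2)$ is unchanged. The additional check that the oracle interface returns a vertex is a reasonable but minor verification and does not change the argument.
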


In the above complexity the number of $d$-dimensional simplices of the computed triangulation $T$ can be exponential in $d$ which is
essentially given by the Upper Bound Theorem for spheres, i.e.\ $|T|=O(n^{(d+1)/2})$~\cite{stanley2004combinatorics}. This stresses the need for total polynomial-time algorithms for the edge-skeleton of $P$.
Note that it is not assumed that the polytopes are well-described. 
But we assume the input contains a superset of all edges for each $P_i$.
In one of the most important cases where we are given the vertices of all summands $P_i$, 
we can compute all edges in each $P_i$ by solving a linear 
program for each pair of vertices.
Each such pair defines a candidate edge.  
Hence, the overall computation of the edges of $P_i$'s is polynomial.  

\begin{corollary}\label{cor:Mink_sum_sep}
Given optimization oracles for well-described $P_1,\dots, P_r\subseteq \RR^d$,  
and supersets of their edge directions $D_1,\dots,D_r$,    
the edge-skeleton of the signed Minkowski sum $P$ can be computed in
oracle total polynomial-time by Algorithm~\ref{Alg:edge_skeleton}.
\end{corollary}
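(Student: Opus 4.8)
The plan is to verify that the three hypotheses required by Theorem~\ref{thm:edge-skeleton} are all satisfied for the signed Minkowski sum $P=P_1+s_2P_2+\cdots+s_rP_r$, after which the corollary follows immediately by invoking that theorem. The three things to check are: (i) an optimization oracle for $P$ is available and runs in oracle polynomial-time; (ii) $P$ is well-described, i.e.\ we can exhibit an explicit parameter $\varphi$ bounding the encoding length of the inequalities in some H-representation of $P$; and (iii) a superset $D\supseteq D(P)$ of the edge directions of $P$ is available, of polynomially-bounded encoding length.

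First I would dispatch (i): Lemma~\ref{lem:Mink_sum_oracle} already gives $\text{OPT}_P$ from the $\text{OPT}_{P_i}$ in $O(r)$ oracle calls, and since each $P_i$ is well-described with its own parameter, each call $\text{OPT}_{P_i}$ runs in oracle polynomial-time; hence $\text{OPT}_P$ does too. Next, for (iii), I would argue that $D(P)\subseteq \bigcup_{i=1}^r D_i$: every edge of a signed Minkowski sum is parallel to an edge of one of the summands. For ordinary Minkowski sums this is the standard fact that the normal fan of $A+B$ is the common refinement of the normal fans of $A$ and $B$, so each edge direction of $A+B$ is an edge direction of $A$ or of $B$; for the signed case, using the interpretation $P+\sum_{s_i=-1}P_i=\sum_{s_i=+1}P_i=:P'$, the sum $P'$ has edge directions contained in $\bigcup_{s_i=+1}D_i$, and since $P$ is a summand of $P'$, every edge direction of $P$ is an edge direction of $P'$ (a summand's normal fan is coarsened by that of the whole sum), hence lies in $\bigcup_i D_i$. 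So we take $D:=\bigcup_{i=1}^r D_i$, whose encoding length is polynomial in $\sum_i\enc{D_i}$.

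The main obstacle is (ii): showing $P$ is well-described with an \emph{explicitly known} $\varphi$. The hypothesis gives that each $P_i$ is well-described with parameter $\varphi_i$, so by Lemma~\ref{lem:encoding} each vertex of $P_i$ has encoding length at most $4d^2\varphi_i$. A vertex of $P$ (indeed of $P'$) is a signed sum of at most $r$ vertices, one from each $P_i$, so every vertex of $P$ has encoding length bounded by roughly $\sum_{i=1}^r 4d^2\varphi_i$ plus a small additive overhead for the summation; call this bound $\nu$. Applying the second half of Lemma~\ref{lem:encoding}, every inequality in the H-representation of $P$ has encoding length at most $3d^2\nu$, which is polynomial in $d$ and $\sum_i\varphi_i$ and, crucially, is a quantity we can write down explicitly from the given data. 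Thus $P$ is well-described with $\varphi:=3d^2\nu$.

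With (i), (ii), (iii) in hand, Theorem~\ref{thm:edge-skeleton} applies verbatim to $P$ with this $\varphi$ and this $D$, giving an oracle total polynomial-time algorithm --- total polynomial in $n$ (the number of vertices of $P$), $|D|=O(\sum_i|D_i|)$, and $\enc{P}+\enc{D}$, all of which are polynomial in the input size together with the output size. This establishes the corollary. I would close by noting, as the surrounding text does, that the hypothesis on the $D_i$ is met for free when the summands are given in V-representation (each $D_i$ found by $O(|V(P_i)|^2)$ linear programs) and for secondary polytopes.
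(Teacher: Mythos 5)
Your proposal is correct and follows essentially the same route as the paper: obtain $\text{OPT}_P$ from Lemma~\ref{lem:Mink_sum_oracle}, bound the vertex and hence inequality encoding lengths via Lemma~\ref{lem:encoding} to certify that $P$ is well-described, take the edge-direction superset from the positively-signed summands (using $P+\sum_{s_i=-1}P_i=\sum_{s_i=+1}P_i$), and invoke Theorem~\ref{thm:edge-skeleton}. The only cosmetic differences are that the paper uses $D=\bigcup_{s_i>0}D_i$ rather than your slightly larger $\bigcup_i D_i$, and bounds the vertex encoding length via the maximum summand parameter plus $\enc{r}$ rather than the sum of the $\varphi_i$; neither affects correctness or the complexity class.
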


\begin{proof}
To be able to apply Algorithm~\ref{Alg:edge_skeleton}, 
first we should show that $P$ is well-described. Let $\enc{P_{max}}$ be the 
maximum encoding length of summands $P_1,\dots, P_r$. 
Then by~Lemma~\ref{lem:encoding}, 
the encoding length of the coordinates of summand vertices is $4d^2\enc{P_{max}}$. 
Thus,  $4d^2\enc{P_{max}}+\enc{r}$ is the encoding 
length of the coordinates of $P$ vertices. 
Finally, $\enc{P}=d+12d^4 \enc{P_{max}}+3d^2\enc{r}$ by~Lemma~\ref{lem:encoding}.
Now $\text{OPT}_P$ is computed by Lemma~\ref{lem:Mink_sum_oracle} in $O(r)$.
The superset of the edge directions of $P$ is $D=\bigcup_{s_i>0}D_i$,
because $D(P_1-P_2)\subset D(P_1)$ since
$P_1-P_2=P_3\Leftrightarrow P_1=P_2+P_3$.
\end{proof}

Our algorithm assumes that, in the Minkowski subtraction $A-B$,
$B$ is a summand of $A$ and does not verify this assumption.

\subsection{Secondary and resultant polytopes}

The {\em secondary polytope} $\Sigma$ 
of a set of $d$ points $A=\{p_1,\dots,p_d\}\subset \ZZ^k$ is a fundamental object since it
expresses the triangulations of conv$(A)$ via a polytope representation.
For any triangulation $T$ of conv$(A)$, define vector $\phi_T\in\RR^{d}$
with $i$-coordinate
\begin{equation}\label{phi_vec}
\phi_T(i)= \sum_{\sigma\in T \ |\ p_i\in\text{vtx}(\sigma)} \mbox{vol}(\sigma),
\end{equation}
summing over all simplices $\sigma$ of $T$ having $p_i$ as a vertex, where vtx$(\sigma)$ is the vertex set of simplex $\sigma$, and $i\in \{1,\dots,d\}$. Now the secondary polytope $\Sigma(\A)$, or just $\Sigma$, is defined as the convex hull of $\phi_T$ for all triangulations $T$.
A famous theorem of~\cite{GKZ}, which is also the central result
in~\cite{DeLRamSan}, states that
there is a bijection between the vertices of $\Sigma$ and 
the regular triangulations of $\conv(A)$.
This extends to a bijection between the face poset of $\Sigma$
and the poset of regular subdivisions of $\conv(A)$.
Moreover, $\Sigma$, although in ambient space
$\RR^{d}$, has actual dimension $\dim(\Sigma)=d-k-1$.

Let us now consider the Newton polytope of resultants, or
{\em resultant polytopes}, for which
optimization oracles provide today the only plausible approach
for their computation~\cite{EFKP12}.

Let us consider sets $A_0,\ldots,A_k \subset \ZZ^k$.
In the algebraic setting, these are the supports of $k+1$
polynomials in $k$ variables.
Let the Cayley set be defined by
$$
A:=\bigcup_{j=0}^{k} (A_{j} \times \{e_{j}\}) \subset \ZZ^{2k},
$$
where $e_0,\ldots,e_k$ form an affine basis of $\ZZ^k$.
Clearly, each point in $A$ corresponds to a unique point in some $A_i$.
The (regular) triangulations of $A$ are in bijective correspondence
with the (regular) fine mixed subdivisions of the Minkowski sum
$A_0+\cdots + A_k$~\cite{GKZ}.
Mixed subdivisions are those where all cells are Minkowski sums of 
convex hulls of subsets of the $A_i$.
A mixed subdivision is fine if, for every cell, the sum
of its summands' dimensions equals the dimension of the cell.

Let $d:=\sum_{j=0}^{k}|A_j|$, then given triangulation $T$ of $\conv(A)$, define vector $\rho_T\in\RR^{d}$ with $i$-coordinate 
\begin{equation}\label{rho_vec}
\rho_T(i) := \sum_{
  {i\text{-mixed }} {\sigma\in T}}
\mbox{vol}(\sigma), 
\end{equation}
where $i\in \{1,\dots,d\}$.
A simplex $\sigma$ is called {\em $i$-mixed} if it contains $p_i\in A_\ell$ for some $\ell\in\{1,\dots,k\}$ and exactly $2$ points from each $A_j$, where $j$ ranges over $\{0,1,\dots,k\}-\{ \ell\}$. 
The {\em resultant polytope} $R$ is defined as the convex hull of $\rho_T$ for all triangulations $T$. Similarly with the secondary polytope, it is in ambient space $\RR^d$ but has dimension $\dim(R)= d-2k-1$~\cite{GKZ}.
There is a surjection, i.e. many to one relation, from the regular triangulations of $\conv(A)$ to the vertices of $R$.

\begin{example}
Let $A_0=\{\{0\},\{2\}\},\ A_1=\{\{0\},\{1\},\{2\}\}$, then the Cayley set will be $A=\{\{0,0\},\{2,0\},\{0,1\},\{1,1\},\{2,1\}\}$. The 5 vertices of the secondary polytope $\Sigma(A)$ are computed using equation~(\ref{phi_vec}):
\begin{align*}
\phi(T_1)&=(2,4,2,0,4),\\
\phi(T_2)&=(4,2,4,0,2),\\
\phi(T_3)&=(4,2,3,2,1),\\
\phi(T_4)&=(3,3,1,4,1),\\
\phi(T_5)&=(2,4,1,2,3),
\end{align*}
and the 3 vertices of the resultant polytope $N(R)$ are computed using equation~(\ref{rho_vec}):
\begin{align*}
\rho(T_1)&=(0,2,0,0,2),\\
\rho(T_2)&=(2,0,2,0,0),\\
\rho(T_3)&=(2,0,2,0,0),\\
\rho(T_4)&=(1,1,0,2,0),\\
\rho(T_5)&=(0,2,0,0,2).
\end{align*}
Note that there are two pairs of triangulations that yield one resultant vertex each. Figure~\ref{sec_res_example} illustrates this example.
\end{example}

\begin{figure}
\includegraphics[width=\textwidth]{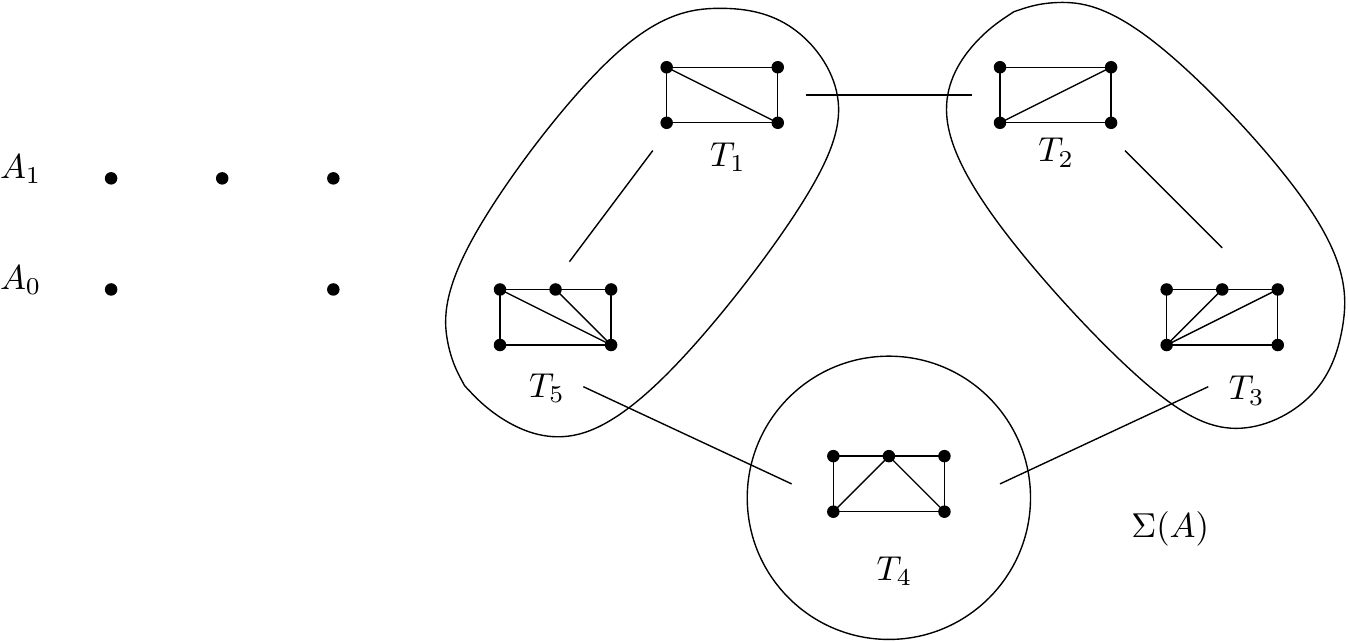}
\caption{Secondary and resultant polytopes.\label{sec_res_example}}
\end{figure}

We consider $k$ fixed because in practice it holds $k\ll d\ll n$,
where $n$ stands for the number of polytope vertices.
Note that $R$ is computed as a full-dimensional polytope in a space
of its intrinsic dimension~\cite{EFKP12} and this approach 
extends to $\Sigma$.

Computing the V-representation of $\Sigma$ and $R$ by the algorithm
in~\cite{EFKP12} is not 
total polynomial.
In fact, the complexity depends on the number of
polytope vertices and facets, but also on the number of
simplices in a triangulation of the polytope (see Proposition~\ref{BB}).
However, we show that Algorithm~\ref{Alg:edge_skeleton} 
computes $\Sigma$ and $R$ in oracle total polynomial-time.

\begin{lemma}\label{lem:R_encoding}
Both $\Sigma$ and $R$ are well-described polytopes. 
\end{lemma}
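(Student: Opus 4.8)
The plan is to bound the encoding length of every vertex of $\Sigma$ (resp.\ $R$) and then invoke Lemma~\ref{lem:encoding} to conclude that a bounded-bitsize H-representation exists. Concretely, a vertex of $\Sigma$ is a vector $\phi_T$ whose coordinates are defined by equation~(\ref{phi_vec}) as sums of (normalized) volumes of the simplices of a triangulation $T$ of $\conv(A)$; similarly a vertex of $R$ is a vector $\rho_T$ given by equation~(\ref{rho_vec}), a sum over the $i$-mixed simplices of $T$. So the first step is to show that each such volume is an integer (or a rational of controlled bitsize): each simplex $\sigma$ in $T$ has vertices among the points of the Cayley set $A\subset\ZZ^{2k}$, so $\mathrm{vol}(\sigma)$ is, up to the fixed normalization factor $1/(2k)!$ (or $1/k!$), the absolute value of a $2k\times 2k$ determinant whose entries are coordinates of points in $A$. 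If $M$ bounds the absolute value of every coordinate of every point in the $A_j$, then by Hadamard's inequality this determinant is at most $(2k)^{k}M^{2k}$ in absolute value, hence has encoding length $O(k\log k + k\log M)$, which for fixed $k$ is $O(\log M)=O(\enc{A})$.

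The second step is to control the number of simplices summed in each coordinate. A triangulation $T$ of $\conv(A)\subseteq\RR^{2k}$ on $d$ points has at most $O(d^{\lfloor 2k/2\rfloor})=O(d^{k})$ full-dimensional simplices by the Upper Bound Theorem (for fixed $k$ this is polynomial in $d$), so each coordinate $\phi_T(i)$ or $\rho_T(i)$ is a sum of at most $O(d^{k})$ terms, each of encoding length $O(\enc{A})$; hence every coordinate of $\phi_T$ (resp.\ $\rho_T$) has encoding length $O(k\log d + \enc{A})$, and the whole vertex has encoding length $\nu = O(d(k\log d+\enc{A}))$, polynomial in the input for fixed $k$. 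The third and final step is to apply the second half of Lemma~\ref{lem:encoding}: since every vertex of $\Sigma$ (resp.\ $R$) has encoding length at most $\nu$, every inequality in its H-representation has encoding length at most $3d^2\nu$, so $\Sigma$ and $R$ are well-described with parameter $\varphi = O(d^2\nu) = \mathrm{poly}(d,\enc{A})$ for fixed $k$.

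Two small points need care. First, $\Sigma$ and $R$ live in $\RR^d$ but have lower intrinsic dimension ($d-k-1$ and $d-2k-1$ respectively); one should either argue in the affine hull, where the lattice-point/determinant bounds apply verbatim, or note that the defining equations of the affine hull are themselves of small bitsize (they are integer linear relations among the $\phi_T$ or $\rho_T$, cf.~\cite{GKZ}), so adding them to the H-representation keeps $\varphi$ polynomial. Second, for $R$ one must check the $i$-mixed condition does not inflate anything: being $i$-mixed is a combinatorial restriction on which simplices contribute, so the coordinate is still a sub-sum of at most $O(d^{k})$ integer volumes, and nothing changes. The only genuine obstacle is getting the volume bound right in the presence of the intrinsic-vs-ambient dimension gap — once one fixes the convention (normalized volume in the affine hull, an integer), the rest is the routine Hadamard estimate plus the Upper Bound Theorem, both already invoked elsewhere in the paper.
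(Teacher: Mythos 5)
Your proposal is correct and follows essentially the same route as the paper: bound the encoding length of the vertex vectors $\phi_T$ and $\rho_T$ and then invoke Lemma~\ref{lem:encoding} to conclude that every inequality in some H-representation has polynomially bounded encoding length. The only difference is in the coordinate bound, where the paper simply observes that each coordinate is at most $\mathrm{vol}(\conv(A))$ (the simplices contributing to a coordinate have disjoint interiors), while you bound each simplex volume by a Hadamard determinant estimate and count the simplices in $T$; both yield the same polynomial bound for fixed $k$.
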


\begin{proof}
For the case of $\Sigma$, given $A\in\ZZ^{k}$, let $\enc{A}$ be its encoding length and $\alpha :=$ vol(conv($A$)).
It is clear that $\alpha=O(\enc{A}^k)$ and thus $\enc{\alpha}=O(k\enc{A})$. 
For each triangulation $T$ each coordinate of $\phi_T$ is upper bounded 
by $\alpha$, since the sum of the volumes of its adjacent simplices 
cannot exceed vol(conv($A$)). This bound is tight for the points 
$a\in A$ of a regular triangulation $T$ where the simplices containing
$a$ partition $\conv(A)$.
It follows that the encoding length of $\Sigma$ vertices 
is $\enc{\alpha}$ and thus $\enc{\Sigma}=4n^2\enc{\alpha}+d=O(dn^2\enc{A})$
by~Lemma~\ref{lem:encoding}.
Similarly, we bound the encoding length of $\rho_T$ which yields that $R$ is also a well-described polytope.
\end{proof}

In the sequel, we characterize the set of edge directions of $\Sigma$ and $R$. 
The edge directions of both $\Sigma,R$ can be computed by enumerating circuits 
of $A$. More specifically, circuit enumeration suffices to compute the 
{\em edge vectors}, i.e.\ both directions and lengths of the edges. 

We first give some fundamental definitions from combinatorial geometry. For a detailed 
presentation we recommend~\cite{DeLRamSan}. 
A {\it circuit} $C\subseteq A$ is a minimum affinely dependent subset of $A$. It holds that conv$(C)$ has exactly two triangulations $C_+,C_-$. 
The operation of switching from one triangulation to another is called \textit{flip}.
Triangulation $T$ of $A$, which equals $C_+$ when restricted on circuit $C$,
is {supported} on $C$ if, by flipping 
$C_+$ to $C_-$, we obtain another triangulation $T'$ of $A$.  
The dimension of a circuit is the dimension of its convex hull. 
If $A$ is in {\em generic position}, then
all circuits $C$ are full dimensional. 
Then all the edges of $\Sigma$ correspond to full dimensional circuits.
If $A$ is {\em not} in generic position, some edges may correspond to
lower-dimensional circuits.

In the case of $R$, where $A=\bigcup_{j=0}^k A_j$,
a circuit $C$ is called {\em cubical} if and only if $|C\cap A_j| \in \{0,2\}$,
$j= 0,\dots,k$. 
If $A$ is in {\em generic position}, all the edges of $R$ correspond to full dimensional cubical circuits~\cite{St94}.

\begin{lemma}\label{lem:res_edges}
Given $A\in\ZZ^{k}$ in generic position,
we compute the set of edge directions of $\Sigma$ in $O(d^{k+2})$.
Given $A\in\ZZ^{2k}$ in generic
position the set of edge directions of $R$ is computed in $O(d^{2k+2})$. 
In both cases, genericity of $A$ is checked within
the respective time complexity.
\end{lemma}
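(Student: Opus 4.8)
The plan is to bound the number of relevant circuits and the cost of processing each one. For $\Sigma$, I would argue as follows. A circuit of $A\subset\ZZ^k$ is an affinely dependent set of size at most $k+2$; under genericity every circuit is full-dimensional, hence has exactly $k+2$ points. Thus the number of circuits is at most $\binom{d}{k+2}=O(d^{k+2})$, and each is found by enumerating all $(k+2)$-subsets of $A$ and testing affine dependence, which is a rank computation on a $(k+1)\times(k+1)$-type matrix — constant cost for fixed $k$. For each circuit $C$ with its two triangulations $C_+,C_-$, the corresponding edge vector of $\Sigma$ is $\phi_{C_+}-\phi_{C_-}$ restricted appropriately; by the explicit formula~(\ref{phi_vec}) this difference has entries that are (signed) volumes of the simplices in the two triangulations of $\conv(C)$, again computable in constant time for fixed $k$. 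Collecting these vectors over all circuits yields a superset of $D(\Sigma)$: every edge of $\Sigma$ corresponds to a flip supported on some circuit, and the circuit's flip determines the edge direction. This gives the $O(d^{k+2})$ bound.

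For $R$ the argument is the same except that only \emph{cubical} circuits contribute: those $C$ with $|C\cap A_j|\in\{0,2\}$ for every $j=0,\dots,k$. In the Cayley embedding $A\subset\ZZ^{2k}$ a full-dimensional circuit has $2k+2$ points, so there are at most $\binom{d}{2k+2}=O(d^{2k+2})$ candidate circuits; we enumerate all $(2k+2)$-subsets, discard those that are not affinely dependent or not cubical, and for each surviving cubical circuit compute the edge vector of $R$ via the difference of the $\rho$-vectors~(\ref{rho_vec}) of its two triangulations. Invoking the result of~\cite{St94} that, in generic position, every edge of $R$ arises from a full-dimensional cubical circuit, the set so produced is a superset of $D(R)$, and the running time is $O(d^{2k+2})$.

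Finally, checking genericity of $A$ falls within the same budget: $A\subset\ZZ^k$ is in generic position iff no $k+1$ of its points are affinely dependent, which we verify by testing all $\binom{d}{k+1}=O(d^{k+1})$ such subsets for affine dependence — dominated by the circuit enumeration above; the analogous check for $A\subset\ZZ^{2k}$ costs $O(d^{2k+1})$, again within $O(d^{2k+2})$. Since the subset-enumeration loops already visit all subsets of the relevant size, the genericity test is essentially free, run as a preprocessing pass.

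The main obstacle I expect is not the counting but the correctness claim that the enumerated vectors genuinely cover \emph{all} edge directions: one must verify that every edge of $\Sigma$ (resp.\ $R$) corresponds to a flip supported on a circuit of the stated type, that the edge \emph{direction} is exactly the normalized $\phi$- (resp.\ $\rho$-) difference for that circuit independent of which triangulation $T$ the flip is performed in, and — for $R$ — that the many-to-one map from triangulations to vertices does not create edge directions unaccounted for by cubical circuits. The genericity hypothesis is what makes this clean (all circuits full-dimensional, and for $R$ the cubical-circuit characterization of~\cite{St94} applies), so I would state explicitly where genericity is used and handle the non-generic remark only informally.
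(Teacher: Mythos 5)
Your proposal is correct and follows essentially the same route as the paper's proof: enumerate all $\binom{d}{k+2}$ circuits (respectively the $\binom{d}{2k+2}$ cubical circuits, invoking the characterization from~\cite{St94}) to obtain the edge vectors as $\phi_{C_+}-\phi_{C_-}$ (resp.\ $\rho_{C_+}-\rho_{C_-}$), and verify genericity by testing affine independence of the small subsets within the same time bound. The only difference is that you spell out the per-circuit edge-vector computation and the coverage concern somewhat more explicitly than the paper does.
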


\begin{proof}
For $\Sigma$, we enumerate all $\binom{|A|}{k+2}$ circuits in
$O(d^{k+2})$, obtaining the set of all edge vectors.
Genericity of $A$ is established by checking whether all 
$\binom{|A|}{k}$ subsets, $k\in\{1,\dots,k+1\}$,
are independent. This is in $O(d^{k+1})$ for $k=O(1)$.

In the case of $R$, where $A=\bigcup_{j=0}^k A_j$,
a flip on $T$ is cubical if and only if it is
supported on a cubical circuit $C$.
In generic position, $|C |=2k+2$.
For those supporting cubical flips, $|C\cap A_j|=2$, $j= 0,\dots,k$.
Every edge $d_C$ of $R$ is supported on cubical flip $C$,
where $d_C(a)$ equals $\rho_{C_+}(a)-\rho_{C_-}(a)$,
if $a\in C$, and $0$ otherwise~\cite{St94}.
Given $A$, all such circuits are enumerated in
$\binom{|A|}{2k+2}=O(d^{2k+2})$; a better
bound is $O(t^{2k+2})$ if $t$ bounds $|A_j|,\ j=0,\dots,k$.
\end{proof}

\begin{lemma}{\rm\cite{EFKP12}} \label{EFKP12_prop}
For $k+1$ pointsets in $\ZZ^k$ of total cardinality $d$, optimization over
$R$ takes polynomial-time, when $k$ is fixed.
\end{lemma}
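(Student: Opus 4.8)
The claim is that, for fixed $k$, the oracle $\OPT_R(c)$---return a vertex of $R$ maximizing $c^Tx$---runs in polynomial time. The plan is to reduce this to a regular-triangulation computation on the Cayley set $A$, followed by an evaluation of the vector $\rho_T$ of~(\ref{rho_vec}). Since $R=\conv\{\rho_T : T\text{ a triangulation of }\conv(A)\}$ and, as noted above, every vertex of $R$ equals $\rho_T$ for some \emph{regular} triangulation $T$, it suffices to produce a regular triangulation $T$ with $c^T\rho_T$ maximal. Concretely, I would view $c\in\RR^d$ as a height function on the $d=|A|$ points of $A$, lift each $a\in A$ to $(a,c_a)\in\RR^{2k+1}$, compute the lower hull of the lifted point set, and project it: this is the regular subdivision $S_c$ of $\conv(A)$ induced by $c$. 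Breaking ties consistently (e.g.\ by a symbolic perturbation of $c$) yields a regular triangulation $T_c$ refining $S_c$. Then I would evaluate $\rho_{T_c}$ directly from~(\ref{rho_vec}): for each simplex $\sigma$ of $T_c$, count $|\sigma\cap A_j|$ for $j=0,\dots,k$ to decide whether $\sigma$ is $i$-mixed for some $i$, and if so add $\mbox{vol}(\sigma)$ to coordinate $i$; output $\rho_{T_c}$.

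The heart of the argument is the correctness of this reduction, i.e.\ that $T_c$ yields the $c$-maximal vertex of $R$. This is the resultant-polytope analogue of the Gelfand--Kapranov--Zelevinsky duality between the secondary fan and the secondary polytope, and I would deduce it from the structural fact (cf.~\cite{GKZ,St94}) that the secondary fan of the Cayley configuration $A$ refines the normal fan of $R$: each maximal secondary cone---the set of heights inducing one fixed regular triangulation $T$---lies inside the normal cone of the single vertex $\rho_T$ of $R$. (This is precisely why $T\mapsto\rho_T$ is many-to-one, as the worked example with $\rho_{T_1}=\rho_{T_5}$ and $\rho_{T_2}=\rho_{T_3}$ shows.) Hence if $c$ lies in the interior of the secondary cone of $T_c$, it lies in the normal cone of $\rho_{T_c}$, so $\rho_{T_c}$ maximizes $c^Tx$ over $R$. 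If $c$ is degenerate it lies on a common wall of several secondary cones, hence in the normal cone of a (possibly positive-dimensional) face of $R$ whose vertices $\rho_T$ all share the same value $c^T\rho_T$; so $\rho_{T_c}$ for any refinement $T_c$ is still optimal, which makes the tie-breaking harmless.

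Finally I would bound the running time for fixed $k$. The lower hull is a convex hull of $d$ points in $\RR^{2k+1}$, hence has $O(d^{k+1})$ faces and is computable in time polynomial in $d$; the triangulation $T_c$ thus has $O(d^{k+1})$ simplices, and for each the mixed-ness test costs $O(d)$ while $\mbox{vol}(\sigma)$ is a $(2k+1)\times(2k+1)$ determinant with entries of encoding length $O(\enc{A})$. All intermediate numbers stay polynomially bounded: as in the proof of Lemma~\ref{lem:R_encoding}, each coordinate of $\rho_{T_c}$ is at most $\mbox{vol}(\conv(A))=O(\enc{A}^{2k})$. The only genuinely delicate point is the correctness of the tie-breaking in the degenerate case, but it is forced by the coarsening property of the normal fan of $R$ just described; the rest is routine. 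This gives a polynomial-time algorithm for $\OPT_R$ when $k$ is fixed.
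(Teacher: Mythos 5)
Your construction is correct and is essentially the argument behind the cited result: the paper itself gives no proof of this lemma (it imports it from~\cite{EFKP12}), and the oracle there is realized exactly as you describe, by computing the regular triangulation of the Cayley configuration induced by the lifting $c$ and evaluating $\rho_{T_c}$, with correctness resting on the fact that the secondary fan refines the normal fan of $R$ and $T\mapsto\rho_T$ identifies each secondary cone with the normal cone of the corresponding vertex (GKZ/Sturmfels). Your handling of degenerate $c$ via a regular refinement of $S_c$ and the fixed-$k$ complexity accounting are both sound, so the proposal stands as a faithful reconstruction of the cited proof.
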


\begin{corollary}\label{cor:res_sec}
In total polynomial-time, we compute the edge-skeleton of $\Sigma\subset\RR^d$,
given $A\in\ZZ^{k}$ in generic position,
and the edge-skeleton of $R$, given $A\in\ZZ^{2k}$ in generic position. 
\end{corollary}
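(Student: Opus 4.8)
The plan is to derive Corollary~\ref{cor:res_sec} by assembling the three ingredients already in place and checking that every quantity entering the bound of Theorem~\ref{thm:edge-skeleton} stays polynomial in the input once $k$ is fixed. Concretely, I would (i) observe that $\Sigma$ and $R$ are well-described with an explicitly computable $\varphi$, (ii) exhibit a genuine polynomial-time optimization oracle for each, (iii) precompute a superset of edge directions, and (iv) run Algorithm~\ref{Alg:edge_skeleton} and invoke Theorem~\ref{thm:edge-skeleton}.

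For step (i), this is exactly Lemma~\ref{lem:R_encoding}, which additionally bounds $\enc{\Sigma}$ and $\enc{R}$ polynomially in $d$ and $\enc{A}$ (for fixed $k$). For step (ii): optimization over $R$ in polynomial time is Lemma~\ref{EFKP12_prop}. For $\Sigma$ I would argue analogously: given $c\in\RR^d$, read $c$ as a height vector on $A$, compute the induced regular triangulation $T_c$ of $\conv(A)$ — a lower-envelope computation in dimension $k+1$, hence polynomial for fixed $k$ — and return $\phi_{T_c}$ via~(\ref{phi_vec}); that this is $\mathrm{argmax}_{x\in\Sigma} c^Tx$ is the defining normal-fan property of $\Sigma$~\cite{GKZ} (non-generic $c$ is handled by the standard lexicographic perturbation, so a vertex is returned, cf.\ the initialization remark after Algorithm~\ref{Alg:edge_skeleton}). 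For step (iii) I would invoke Lemma~\ref{lem:res_edges}: since $A$ is assumed generic, enumerating all $\binom{|A|}{k+2}$ circuits (resp.\ all cubical circuits, of which there are $O(d^{2k+2})$) yields in time $O(d^{k+2})$ (resp.\ $O(d^{2k+2})$) a set $D\supseteq D(\Sigma)$ (resp.\ $D\supseteq D(R)$), and certifies genericity within the same bound. In particular $|D|$ is polynomial in $d$, and each direction is the vector of volumes of simplices on at most $k+2$ (resp.\ $2k+2$) points of $A$, so $\enc{D}$ is polynomial in $d$ and $\enc{A}$.

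Now feed $\text{OPT}_\Sigma$ (resp.\ $\text{OPT}_R$) together with $D$ into Algorithm~\ref{Alg:edge_skeleton}. Theorem~\ref{thm:edge-skeleton} gives a running time $O(n|D|(\OO(\enc{P}+\enc{D}) + \LP(d^3|D|(\enc{P}+\enc{D})) + d\log n))$, where $P\in\{\Sigma,R\}$. Every argument of the polynomials $\OO$ and $\LP$ is polynomial in $d$ and $\enc{A}$ by Lemma~\ref{lem:R_encoding} and the bound on $\enc{D}$ above; $|D|$ is polynomial in $d$; and the output parameters $n$ (number of vertices) and $m=O(n|D|)$ (number of edges) enter the bound polynomially. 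Hence the whole expression is a polynomial in the input and output sizes; and since the oracle queries are resolved by the honest polynomial-time procedures of step (ii), the algorithm is a (weakly) total polynomial-time algorithm, as claimed.

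The main obstacle — and the sole reason $k$ must be fixed — is keeping $|D|$ (and $\enc{D}$) polynomial: the number of circuits of an $|A|=O(d)$-point configuration in $\ZZ^k$ is $\binom{|A|}{k+2}=O(d^{k+2})$, polynomial only because $k$ is constant, and this quantity resurfaces as the multiplicative factor $n|D|$ and inside $\LP(d^3|D|\cdots)$ in Theorem~\ref{thm:edge-skeleton}; for varying $k$ it blows up exponentially. A second point to state carefully is that the circuit (resp.\ cubical-circuit) enumeration of Lemma~\ref{lem:res_edges} returns a correct superset of edge directions only under the genericity hypothesis on $A$ — otherwise lower-dimensional circuits contribute edges and must be included — which is exactly why genericity appears in the statement of the corollary.
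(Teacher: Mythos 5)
Your proposal is correct and follows essentially the same route as the paper: combine Lemma~\ref{lem:R_encoding} (well-describedness), Lemma~\ref{EFKP12_prop} (polynomial-time optimization oracle), and Lemma~\ref{lem:res_edges} (precomputed superset of edge directions under genericity), then run Algorithm~\ref{Alg:edge_skeleton} and invoke Theorem~\ref{thm:edge-skeleton}. You merely add helpful detail the paper leaves implicit, namely the explicit regular-triangulation (lower-envelope) oracle for $\Sigma$ and the accounting showing $|D|$ and $\enc{D}$ stay polynomial for fixed $k$.
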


\begin{proof} 
Since by Lemma~\ref{lem:R_encoding} $\Sigma,R$ are well-bounded, 
optimization oracles are available by Lemma~\ref{EFKP12_prop} and 
the set of edge directions by Lemma~\ref{lem:res_edges}, 
the edge-skeletons of $\Sigma,R$ can be computed by 
Algorithm~\ref{Alg:edge_skeleton} in oracle total polynomial-time.
Moreover, since the optimization oracle is polynomial-time 
this yields a (proper) total polynomial-time algorithm for $\Sigma,R$. 
\end{proof}

Following Lemma~\ref{lem:res_edges}, for $\Sigma$, $R$
we also obtain their edge lengths.
This can lead to a more efficient edge-skeleton algorithm on the real RAM.

\begin{remark}
Our results readily extend to the Newton polytope of
discriminants, or discriminant polytopes. This follows from the fact that these polytopes can be written as signed Minkowski sums of secondary polytopes~\cite{GKZ}.
\end{remark}

\section{Concluding remarks}

We have presented the first total polynomial-time algorithm for computing the edge-skeleton
of a polytope, given an optimization oracle, and a set of directions 
that contains the polytope's edge directions.
Our algorithm yields the first (weakly) total polynomial-time algorithms 
for the edge-skeleton (and vertex enumeration) of
signed Minkowski sum, and resultant polytopes.

An open question is a {\em strongly} total polynomial-time algorithm 
for the edge-skeleton problem.
Another is to solve the edge-skeleton problem without edge directions; 
characterizations of edge directions for polytopes in H-representation 
are studied in~\cite{RothblumOnn_findedges}.
It is also interesting to investigate new classes of 
convex combinatorial optimization 
problems where our algorithm offers a polynomial-time algorithm.

\section{Acknowledgments}
The authors thank K.~Fukuda, C.~M\"uller, S.~Stich for discussions and
bibliographic suggestions.
Most of the work was done while Vissarion was at the University of Athens.
All authors are partially supported from project ``Computational
Geometric Learning'', which acknowledges the financial support of the Future and
Emerging Technologies programme within the 7th Framework Programme for
research of the European Commission, under FET-Open grant number: 255827.
I.Z.E.\ and V.F.\ are partially supported by the European Union (European Social Fund - ESF) and Greek national funds through the Operational Program ``Education and Lifelong Learning" of the National  Strategic Reference Framework (NSRF) - Research Funding Program: THALIS - UOA (MIS 375891).

\bibliographystyle{alpha}  
\bibliography{bibliography} 

\end{document}